\newtheorem{theorem}{Theorem}[section]
\newtheorem{corollary}[theorem]{Corollary}
\newtheorem{lemma}[theorem]{Lemma}
\theoremstyle{definition}
\newcommand{\kibitz}[2]{\ifnum\Comments=1{\color{#1}{#2}}\fi}
\newcommand{\vv}{\mathbf{v}}
\newcommand{\SW}{\text{SW}}
\newcommand{\EQ}{\text{EQ}}
\newcommand{\PoA}{\text{PoA}}
\newcommand{\PoS}{\text{PoS}}
\newcommand{\calI}{\mathcal{I}}
\title{\bf Schelling Games on Graphs\thanks{This work has been supported by the European Research Council (ERC) under grant number 639945 (ACCORD), and by the KAKENHI Grant-in-Aid for JSPS Fellows number 18J00997.}}
\author[1]{Edith Elkind}
\author[1]{Jiarui Gan}
\author[2]{Ayumi Igarashi}
\author[1]{\\ Warut Suksompong}
\author[1]{Alexandros A. Voudouris}
\affil[1]{Department of Computer Science, University of Oxford}
\affil[2]{Department of Computer Science and Communication Engineering, Kyushu University}
\date{}
\begin{document}

\maketitle

\begin{abstract}
We consider strategic games that are inspired by Schelling's model
of residential segregation. In our model, the agents are partitioned into $k$
types and need to select locations on an undirected graph. Agents can be either stubborn, 
in which case they will always choose their preferred location, or strategic, 
in which case they aim to maximize the fraction of agents of their own type
in their neighborhood. We investigate the existence of equilibria in these games, 
study the complexity of finding an equilibrium outcome or an outcome with high social welfare, 
and also provide upper and lower bounds on the price of anarchy and stability.
Some of our results extend to the setting where the preferences of the agents over their
neighbors are defined by a social network rather than a partition into types.
\end{abstract}

\section{Introduction} \label{sec:intro}
In 2015, African Americans constituted 83\% of the population of the City of Detroit. 
At the same time, the neighboring Oakland County was 77\% white, and in the city of Dearborn in Detroit metropolitan 
area about 30\% of the residents were Arab Americans. Similar phenomena can be observed
in many other major metropolitan areas around the world. In the developed world, the leading 
cause of such population patterns is not direct discrimination, which is typically illegal;
rather, it is the residents themselves who tend to select neighborhoods 
where their ethnic or social group is well-represented. \citet{S69,S71} 
proposed the following stylized model of this phenomenon: Agents of two different types are placed 
on a line or on a grid, and are assumed to be happy if at least a fraction $\tau$ of the  
agents within distance $w$ from them are of the same type, for some parameters $\tau$ and $w$; unhappy 
agents can either jump to empty positions or swap positions with other agents. Using simple 
experiments, Schelling showed that, even in cases where the agents are not opposed to integration 
($\tau < 1/2$), this behavior leads to almost complete segregation.

In the 50 years since Schelling's pioneering paper, this segregation model attracted the attention 
of many researchers, mostly in sociology and economics~\citep{AL93,BW07,BHO09,CF08,PV07,Y01,Z04a,Z04b}, 
but recently also in computer science~\citep{BEL14,BEL15,BIKK12,IKLZ17}. 
While the early work in this area was mainly empirical, 
the more recent papers have provided theoretical analysis. 
In particular, it was proved that the local behavior of unhappy agents 
is likely to create very large regions consisting of agents of the same type, even when $\tau$ is small, 
i.e., even when the agents themselves are tolerant towards having neighbors of the other type. 
The vast majority of this work was based on Schelling's original model, where agents' behavior
was explained by a simple stochastic model rather than strategic considerations. 

An alternative approach is to assume that the behavior of each agent is {\em strategic}, 
and exploit tools and techniques from non-cooperative game theory 
to analyze the induced games. 
To the best of our knowledge, there are only two papers in the literature that pursue this agenda.
Specifically, \citet{Z04b} considered a model with transferable utility where agents prefer 
to be in a balanced neighborhood. More recently, \citet{CLM18} investigated a setting 
that is closer to Schelling's motivating scenario, and also incorporates the idea
that, in addition to preferences over the composition of their neighborhood, 
agents may also have preferences over locations. In the model of \citet{CLM18}, 
there are two types of agents, and an agent $i$'s {\em happiness ratio}
is defined as the fraction of agents of $i$'s type among $i$'s neighbors.
Each agent has two further parameters: a tolerance threshold $\tau\in(0, 1)$ and a preferred location. 
An agent's primary goal is to find a location where her happiness ratio exceeds 
the tolerance threshold; if no such location is available, she aims to maximize
her happiness ratio. An agent's secondary goal is to minimize the distance to her preferred location.
To achieve these goals, agents can either swap locations (swap games) or jump to unoccupied locations
(jump games). The main contribution of the paper is to identify conditions under which agents
are guaranteed to converge to an equilibrium; for instance, the authors establish
that in jump games, convergence is guaranteed if agents have no preferred
locations and the underlying network is a ring.

\subsection{Our contribution}
The model of \citet{CLM18} makes an important contribution to the literature by enriching
Schelling's model with two additional components: agents who are fully strategic, 
and location preferences. However, the resulting model of agents' preferences is quite complex, 
and, consequently, not easy to analyze: the positive results in the paper
are limited to special cases of the utility function and highly regular networks.
In this paper, we propose a simpler model that aims to capture the same phenomena 
and is more amenable to formal analysis.

Specifically, just as in the work of \citet{CLM18}, 
in our basic model the agents are partitioned into $k$ types
and the set of available locations is represented by an undirected graph, 
which we will refer to as the {\em topology}.
We also incorporate location preferences in our model; however, instead of assuming 
that optimizing the distance to the preferred location is the secondary goal of every agent, 
we assume that agents are either {\em stubborn}, in which case they stay at their chosen 
location irrespective of their surroundings, or {\em strategic}, in which case they
aim to maximize their happiness ratio by jumping to an unoccupied location (we do not consider
swaps in this paper). Our model captures the fact that, in practice, many residents are unwilling 
to move to another area even if they are no longer satisfied with the composition of their neighborhood.
Importantly, unlike \citet{CLM18} or Schelling in his original work, we do not assume that agents have tolerance
thresholds; rather, a strategic agent is willing to move as long as there exists another
location with a better happiness ratio. Towards the end of the paper (Section~\ref{sec:extensions}), 
we also discuss several variants of this basic model. In particular, we show that some of our positive results
extend to the setting where there are no types, but rather the agents are connected by a social network
and care about the fraction of their friends (i.e., their neighbors in the social network)
among their neighbors in the topology; we refer to the resulting class of games as 
{\em social Schelling games}.

The rest of the paper is organized as follows.
We define our model in Section~\ref{sec:prelim}. Then, in Section~\ref{sec:existence}, 
we show that for some classes of topologies, such as stars and graphs of maximum degree two, 
our games always admit a pure Nash equilibrium, i.e., the strategic agents can be assigned
to the nodes of the topology so that none of them wants to move to a different location;
this result holds even for social Schelling games. In contrast, an equilibrium may fail 
to exist even if the topology is acyclic and has maximum degree four.
In Section~\ref{sec:complexity}, we complement this result by 
presenting a dynamic programming algorithm that decides whether an equilibrium 
exists on a tree topology; this algorithm runs in polynomial time if the number of types
is bounded by a constant. For more general topologies, we prove that deciding whether an equilibrium 
exists is an NP-complete problem. Similar hardness and easiness results hold for the problem
of maximizing the social welfare (the total utility of all strategic agents).
In Section~\ref{sec:poa-pos}, we study the effect of the strategic behavior  on 
the social welfare, by bounding the price of anarchy~\citep{KP99} and the price of stability~\citep{ADKTWR08}. 
In particular, we show that even in the absence of stubborn agents it may be
impossible to achieve the maximum social welfare in equilibrium. 
In Section~\ref{sec:extensions} we discuss several variants and extensions of our model
and establish some preliminary results for these new models, as well as outline directions
for future work.

\subsection{Other related work}
For an accessible introduction to the Schelling model and a the survey of the literature on non-strategic
variants, see chapter~4 in the book of \citet{EK10}, and the papers by \citet{BIKK12} and \citet{IKLZ17}.

Besides the work of \citet{CLM18}, which was discussed in detail earlier,   
our model shares a number of properties with hedonic games
\citep{DG80,BJ02}; these are games where agents split into coalitions, 
and each agent's utility is determined by the composition of her coalition. 
Specifically, in {\em fractional hedonic games} \citep{ABH14} the relationships among the agents 
are described by a weighted directed graph, where the weight of an edge $(i, j)$
is the value that agent $i$ assigns to agent $j$, and an agent's utility 
for a coalition is her average value for the other members in the coalition. If the graph
is undirected and all edge weights take values in $\{0, 1\}$, it can be interpreted 
as a friendship relation; then an agent's utility in a coalition is computed 
as the fraction of her friends among the coalition members, which is very similar 
to how utilities are defined in social Schelling games. On the other hand, the type-based
model is closely related to the Bakers and Millers game discussed by \citet{ABH14}.
This connection between Schelling games and hedonic games 
motivates much of the discussion in Section~\ref{sec:extensions}.
Of course, a fundamental difference between hedonic games and our setting is that
in the former agents derive their utilities from pairwise disjoint coalitions, whereas
in our model utilities are derived from (overlapping) neighborhoods. 

\section{The Model} \label{sec:prelim}
Let $N=\{1, \dots, n\}$ be a set of $n \geq 2$ {\em agents}.
The agents are partitioned
into $k \geq 2$ different {\em types} $T_1, \dots, T_k$
so that $\cup_{j=1, \dots, k} T_j = N$; we write ${\cal T}=(T_1, \dots, T_k)$.
We say that two agents $i, j\in N$, $i\neq j$, are {\em friends} if $i, j\in T_\ell$
for some $\ell\in [k]$; otherwise we say that $i$ and $j$ are {\em enemies}.
For each $i\in N$, we denote the set of all friends of agent $i$ by $F(i)$.

A {\em topology} is an undirected graph $G=(V,E)$ with no self-loops.
Each agent in $N$ has to select a node of this graph so that there are no collisions.
The agents are classified as either {\em strategic} or {\em stubborn};
let $R$ and $S$ denote these sets of agents so that $R \cup S = N$.
Stubborn agents care about their location only: each stubborn agent
has a preferred node and never moves away from that node. Thus, the preferences
of stubborn agents can be described by an injective mapping $\lambda:S\to V$;
for each $i\in S$ the node $\lambda(i)$ is the preferred node of agent $i$.
In contrast, strategic agents do not care about their location, but want
to be in a neighborhood that has a large proportion of their friends,
and are willing to move to a currently unoccupied node in order to increase
their utility.

Formally, given a set of agents $N=R\cup S$ with $|N|=n$, a topology $G=(V, E)$ with $|V|> n$
and a mapping $\lambda:S\to V$, an {\em assignment} is a vector
$\vv=(v_1,\dots, v_n)\in V^n$ such that (1) $v_i=\lambda(i)$ for each $i\in S$
and (2) $v_i\neq v_j$ for all $i, j\in N$ such that $i\neq j$;
here, $v_i$ is the node of the topology where agent $i$ is positioned.
A node $v \in V$ is {\em occupied} by agent $i$ if $v = v_i$.
For a given assignment $\vv$ and an agent $i\in N$,
let $N_i(\vv)=\{j \in N: \{v_i,v_j\} \in E \}$ be the set of neighbors of
agent $i$. Let $f_i(\vv)=|N_i(\vv) \cap F(i)|$ be the number of neighbors of $i$
in $\vv$ who are her friends. Similarly,
let $e_i(\vv) = |N_i(\vv)| - f_i(\vv)$ be the number
of neighbors of $i$ in $\vv$ who are her enemies.
Following \citet{CLM18}, 
we define the utility $u_i(\vv)$ of an agent $i\in R$ in $\vv$ to be $0$ if $f_i(\vv)=0$; otherwise, her utility is defined as the fraction of her friends among the agents in the neighborhood:
$$
u_i(\vv) = \frac{f_i(\vv)}{f_i(\vv)+e_i(\vv)}.
$$

A tuple $I=(R,S,{\cal T},G, \lambda)$, where $R$ is the set of strategic agents,
$S$ is the set of stubborn agents, ${\cal T}=(T_1, \dots, T_k)$ is a list of types,
$G=(V, E)$ is a topology that satisfies $|V|> |R|+|S|$,
and $\lambda$ is an injective mapping from $S$ to $V$,
is called a {\em $k$-typed Schelling game} or {\em $k$-typed instance};
let $\calI$ be the set of all possible games.
We say that an assignment $\vv$ is a {\em pure Nash equilibrium}
(or, simply, {\em equilibrium}) of $I$ if
no strategic agent $i$ has an incentive to unilaterally deviate to an empty node $z$ of $G$
in order to increase her utility, i.e.,
for every $i\in R$ and for every node $z\in V$ such that $z\neq v_j$
for all $j\in R\cup S$ it holds that
$u_i(\vv) \geq u_i(z,\vv_{-i})$,
where $(z,\vv_{-i})$ is the assignment obtained by changing the $i$-th entry of
$\vv$ to $z$.
Let $\EQ(I)$ denote the set of all equilibria of game $I$.

The {\em social welfare} of an assignment $\vv$ is defined as the total utility
of all strategic agents:
$$
\SW(\vv) = \sum_{i \in R} u_i(\vv).
$$
Let $\vv^*(I)$ be an assignment that maximizes the social welfare for a given game $I$; we refer to it as an {\em optimal} assignment.

The {\em price of anarchy} (PoA) of game $I$ with at least one equilibrium is the ratio between the 
optimal social welfare and the social welfare of the {\em worst} equilibrium; its 
{\em price of stability} (PoS) is defined as the ratio between the optimal 
social welfare and the social welfare of the {\em best} equilibrium:
\begin{align*}
\PoA(I) &= \sup_{\vv \in \EQ(I)} \frac{\SW(\vv^*(I))}{\SW(\vv)}, \\
\PoS(I) &= \inf_{\vv \in \EQ(I)} \frac{\SW(\vv^*(I))}{\SW(\vv)}.
\end{align*}
The price of anarchy and the price of stability are the suprema of $\PoA(I)$ and $\PoS(I)$ over all
$I \in \calI$ such that $\EQ(I)\neq \emptyset$, respectively.

\section{Existence of Equilibria} \label{sec:existence}
In this section, we focus on the existence of equilibria.
We warm up by observing that for highly structured topologies such as
paths, rings, and stars, there is always at least one equilibrium assignment,
and some such assignment can be computed efficiently. This can be shown directly, 
and also follows from a more general result established in Section~\ref{sec:extensions}
(Theorem~\ref{thm:social-star-deg2}).

\begin{theorem}\label{thm:typed-star-deg2}
Every $k$-typed Schelling game
where the topology is a star or a graph of maximum degree $2$
admits at least one equilibrium assignment, which can be computed in polynomial time.
\end{theorem}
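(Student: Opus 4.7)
The plan is to handle the two topologies separately. Consider first a star with center $c$. A strategic agent on a leaf has only $c$ as a potential neighbor, so her utility is entirely determined by whether $c$ is occupied by a friend, an enemy, or is empty. If some stubborn agent is assigned to $c$ by $\lambda$, I place all strategic agents on leaves in any injective way: each strategic leaf agent cannot improve by jumping to another empty leaf (same neighbor, same utility) and cannot access the occupied center. If no stubborn agent is at $c$, I place one strategic agent at $c$ and the remaining strategic agents on leaves; the agent at $c$ can only deviate to an empty leaf, but then $c$ becomes empty and her utility drops to $0$, at most her current utility, while the other leaf agents cannot access the occupied center and moving between leaves does not change their utility. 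This yields an equilibrium in linear time.

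For a graph of maximum degree $2$, I would instead exhibit a lexicographic potential function. For an assignment $\vv$, let $F(\vv)$ denote the number of edges of the topology whose two endpoints are occupied by friends, and let $E(\vv)$ be the analogous count for enemies. Consider any improving deviation in which a strategic agent $i$ jumps from $v$ to an empty node $z$, strictly increasing $u_i$. The only affected edges are those incident to $v$ (which drop out of the counts since $v$ becomes empty) and those incident to $z$ (which are newly added). Hence $\Delta F$ and $\Delta E$ equal the changes in $f_i$ and $e_i$, respectively. Since the maximum degree is $2$, the pair $(f_i, e_i)$ ranges over $\{(0,0),(0,1),(0,2),(1,0),(1,1),(2,0)\}$ and $u_i \in \{0, \tfrac{1}{2}, 1\}$. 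A short case analysis over improving transitions shows that either $F$ strictly increases, or $F$ is unchanged and $E$ strictly decreases. Thus $(F(\vv), -E(\vv))$ strictly increases in lexicographic order at every improving move. Since $F$ and $E$ are non-negative integers bounded by the number of edges of the topology, best-response dynamics starting from any valid initial assignment converges to an equilibrium after polynomially many steps, each of which is polynomial-time computable.

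The main obstacle is the single improving transition $(f_i, e_i) = (1, 1) \to (1, 0)$, where $u_i$ rises from $\tfrac{1}{2}$ to $1$ but $F$ is unchanged. This is precisely why the lexicographic refinement by $-E$ is needed: $i$ sheds an enemy adjacency without creating a new one, so the second coordinate strictly increases. All other improving transitions increase $F$ by at least one, so handling them is routine.
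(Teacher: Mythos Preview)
Your argument is correct. The star case is essentially identical to the paper's (the paper simply observes that any assignment with the center occupied is an equilibrium, without splitting on whether a stubborn agent sits there, but the content is the same).

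For graphs of maximum degree~$2$, both you and the paper use a potential-function argument, but the potentials differ. The paper defines a single scalar potential by assigning weight~$1$ to every edge whose endpoints are occupied by friends, weight~$0$ to every edge whose endpoints are occupied by enemies, and weight~$\tfrac{1}{3}$ to every edge with at least one empty endpoint; it then argues, via a case split on $\deg(v)$ and $\deg(w)$, that this sum strictly increases under any improving jump. Your lexicographic pair $(F,-E)$ is a different (and arguably more transparent) choice: once one checks that $\Delta F=\Delta f_i$ and $\Delta E=\Delta e_i$---which you do---the six-case enumeration of $(f_i,e_i)$ values is immediate, and the single ``flat'' transition $(1,1)\to(1,0)$ is exactly what forces the second coordinate. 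Note that the paper's potential, rewritten, is $\tfrac{2}{3}F-\tfrac{1}{3}E$ plus a constant, so in effect it collapses your lexicographic order into the scalar $2F-E$; this works here because improving moves never decrease $F$, so the two orderings agree on the relevant transitions. Your approach avoids the $\tfrac{1}{3}$ bookkeeping and the separate treatment of the adjacent-$v,w$ case; the paper's scalar potential gives a slightly sharper step bound ($O(|V|)$ rather than $O(|V|^2)$). Both arguments extend verbatim to the social (non-typed) setting, since neither uses the type structure beyond the symmetry of friendship.
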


However, in general, an equilibrium may fail to exist;
this holds even if the topology is acyclic and there are no stubborn agents.

\begin{figure}[t]
\center
\includegraphics[scale=0.45]{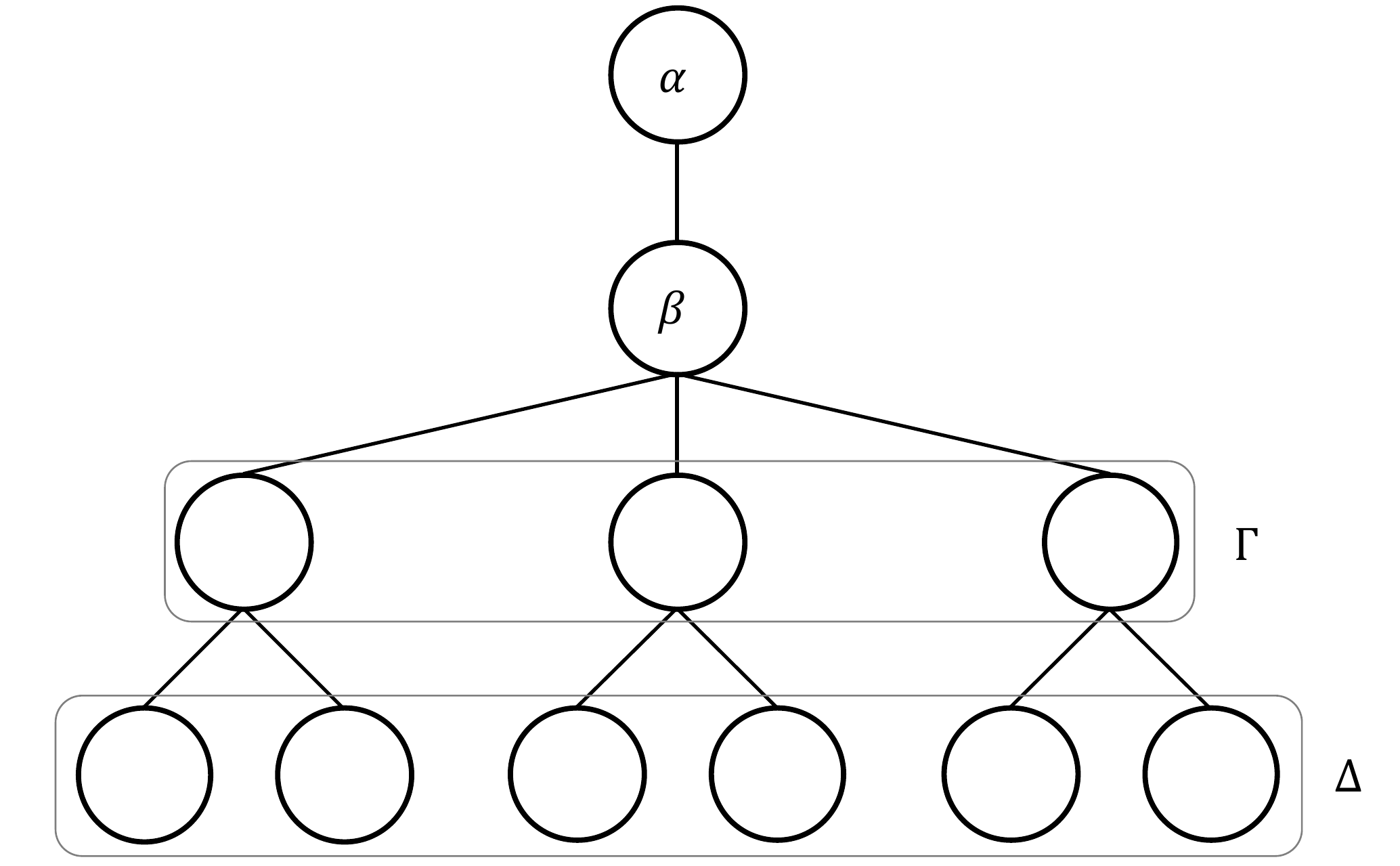}
\caption{Example of the topology used in the proof of Theorem~\ref{thm:non-existence} for $k=2$.}
\label{fig:non-existence}
\end{figure}

\begin{theorem}\label{thm:non-existence}
For every $k\ge 2$ there exists a $k$-typed instance $(R, S, {\cal T}, G, \lambda)$
where $S=\emptyset$ and $G$ is a tree that does not admit an equilibrium.
\end{theorem}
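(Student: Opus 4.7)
The plan is to construct, for every $k \geq 2$, an explicit $k$-typed Schelling instance on a tree that admits no pure Nash equilibrium. I would first concentrate on the case $k=2$ with no stubborn agents. The witness would be the tree $G$ shown in Figure~\ref{fig:non-existence}, equipped with a carefully chosen number of red and blue strategic agents. The design principles are that $G$ should offer only a limited number of positions at which an agent can achieve utility $1$ (i.e., nodes whose occupied neighbours are all friends), that the combined demand of reds and blues for those positions should exceed the supply, and that every assignment leaving some agent strictly below utility $1$ should still admit an empty node adjacent to one of her friends. With these features in place, the argument is an exhaustive case analysis: enumerate all assignments up to the automorphisms of $G$, and in each case exhibit a strategic agent $i$ and an empty node $z$ with $u_i(z,\vv_{-i}) > u_i(\vv)$.

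To extend from $k=2$ to $k \geq 3$, I would attach to $G$, for every new type $T_3,\ldots,T_k$, a pendant gadget consisting of an edge whose two endpoints host two agents of the corresponding type. Connecting each such edge to $G$ via a path long enough that its endpoints are not adjacent to any node of the original gadget ensures that the two added agents are each other's only neighbours, so their utility is identically $1$ and they have no incentive to move. The extra types therefore do not affect the incentives of the red or blue agents, and the non-existence argument from the $k=2$ case carries over verbatim to the extended instance. Since each new gadget is itself a tree attached to $G$ by a path, the resulting topology remains acyclic.

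The most demanding step will be the case analysis for $k=2$. Even after collapsing configurations using the automorphism group of $G$, many distinct placements remain, and each must be checked against several candidate deviations; moreover, the deviation exhibited must not ``undo'' a previously handled case but rather point to a different (possibly empty) node whose neighbourhood, as a function of the current $\vv$, yields a strictly better ratio for some agent. The central design trade-off is therefore between the combinatorial simplicity of $G$ (to keep the number of cases small) and the structural richness needed to rule out stability across every assignment; targeting a tree of maximum degree $4$, as suggested in the discussion preceding the theorem, seems to strike a workable balance.
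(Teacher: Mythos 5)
Your proposal has two genuine gaps, one in each half of the argument.

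First, for $k=2$ you never actually produce the proof: you name the figure, state ``design principles,'' and assert that an exhaustive case analysis will work, but you specify neither the number of agents of each type nor a single concrete case. The entire mathematical content of the theorem is that case analysis. The paper's instance for general $k$ has $2k+1$ agents per type on a four-layer tree with exactly one more node than agents, and the analysis is organized by the location of the \emph{unique} empty node (root, its child, a middle-layer node, or a leaf) --- four cases, each resolved by exhibiting a deviating agent. Without fixing the agent counts and identifying such an organizing principle, ``enumerate all assignments up to automorphism'' is not a proof but a promise, and it is not obvious that the promise can be kept for an arbitrary choice of parameters satisfying your informal design criteria.

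Second, the extension to $k\ge 3$ by attaching pendant edges via paths is not sound. Adding path nodes adds empty nodes, so the extended instance no longer has a unique empty node, the set of assignments grows (red and blue agents may occupy path nodes, and the new-type agents may occupy gadget nodes), and the $k=2$ argument therefore does not ``carry over verbatim'' --- every new configuration must be re-examined. Worse, the extra space can \emph{create} equilibria. For instance, with a sufficiently long attaching path one can place all five blue agents consecutively on the path, all five red agents on the root, its child, and the three middle-layer nodes (leaving all leaves empty), and the two new-type agents on the pendant edge; one checks that every agent either has utility $1$ or has no empty node offering a strictly higher ratio, so this is an equilibrium and your extended instance fails to witness the theorem. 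The paper avoids this entirely by constructing, for each $k$, a single tree with $2k-1$ middle nodes and $k$ children per middle node, so that the one-empty-node case analysis goes through uniformly in $k$ with no auxiliary gadgets.
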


\begin{proof}
Given $k\ge 2$, we construct an instance with $2k+1$ agents per type; the total number of agents
is $n=k(2k+1)$. The topology $G=(V,E)$ is a tree that consists of $|V|=n+1$ nodes, which
are distributed over four layers. Specifically, the tree has a root $\alpha$,
which has one child $\beta$. Node $\beta$ has $2k-1$ children; we denote the set of its children by $\Gamma$.
Each node in $\Gamma$ has $k$ children, which are the leaves of the tree;
we denote the set of all leaves by $\Delta$.
Figure~\ref{fig:non-existence} depicts the topology for $k=2$. Now, assume that there is an
equilibrium assignment; note that exactly one node is left empty.
We consider four cases depending on the location of the empty node.
\begin{description}
\item {\bf Node $\alpha$ is empty.} Assume that the agent occupying node $\beta$ is of type $T$.
Then, since there are $2k$ other agents of type $T$ and there are only $2k-1$ nodes in $\Gamma$, 
there must exist some subtree rooted
at a node in $\Gamma$ that contains both agents of type $T$ and agents that belong to other types.
Then an agent of type $T$ from this subtree has an incentive to deviate to $\alpha$.

\item {\bf Node $\beta$ is empty.} Assume that the agent occupying node $\alpha$ is of type $T$;
note that her utility is $0$.
If she does not have an incentive to deviate to $\beta$, it follows that
no agent of type $T$ occupies a node in $\Gamma$. But then there is an agent of type $T$
who occupies a node in $\Delta$; as her parent is not of type $T$, her utility is $0$,
and she can increase it by moving to $\beta$.

\item {\bf Some node $\gamma \in \Gamma$ is empty.} Consider the agents occupying the children of $\gamma$;
note that their utility is $0$. If at least two of them have the same type, each of them
has an incentive to deviate to $\gamma$ in order to increase her utility to at
least $\frac{1}{k}$. If all of them have different types, then there is
exactly one agent of each type in this set. In particular, there is an agent $i$ who has the
same type as the agent occupying $\beta$; then $i$ can move to $\gamma$ to increase her utility.

\item
{\bf Some node $\delta \in \Delta$ is empty.} Let $\gamma$ denote the parent of this node,
and suppose that $\gamma$ is occupied by an agent $i$ of type $T$. We say that
an agent $j$ of type $T$ is {\em hungry} if $j\neq i$ and $j$ is adjacent to at least one agent
of a different type; note that a hungry agent has an incentive to deviate to $\delta$.
We claim that at least one agent is hungry.
Indeed, if $\beta$ is occupied by an agent $j$ of type $T$, then either $j$ is hungry
or every agent in $\Gamma\setminus\{\gamma\}$ is hungry.
If the agent in $\beta$ is not of type $T$ and there is an agent $\ell$ of type $T$ in
$\Gamma\setminus\{\gamma\}$, then $\ell$ is hungry. Finally, if no agent in
$\Gamma\setminus\{\gamma\}$ is of type $T$, there exists a leaf node
not in $\gamma$'s subtree that is occupied by an agent $r$ of type $T$;
$r$ is then hungry.
\end{description}
The proof is complete.
\end{proof}


\section{Computational Complexity}\label{sec:complexity}
We now turn our attention to the computational complexity of
$k$-typed Schelling games. The main result of this section is that 
finding an equilibrium assignment is computationally intractable.

\begin{theorem}\label{thm:eq-hardness}
For every $k\ge 2$, given a $k$-typed Schelling game $I$,
it is {\em NP}-complete to decide whether $I$ admits an equilibrium assignment.
The hardness result holds even if all strategic agents belong to the same type.
\end{theorem}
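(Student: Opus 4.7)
Membership in NP is routine: an assignment $\vv$ is a certificate, and we can verify in polynomial time that it is an equilibrium by iterating over each strategic agent $i$, each empty node $z$, computing $u_i(\vv)$ and $u_i(z,\vv_{-i})$, and checking the inequality. So the work is in establishing NP-hardness. The plan is to reduce from a standard NP-complete problem (my first attempt would be 3-SAT, with NAE-3-SAT and Exact Cover as backups) by building, for each SAT instance $\varphi$, a game $I_\varphi$ in which the strategic agents are all of one type (call them R-agents) and every other type appears only among stubborn agents. Stubborn agents of various types will be used purely as scenery: by pinning them to fixed nodes we can make the ``background enemy count'' of any target node be whatever we want, and thereby dictate the utility that an R-agent would obtain at that node given the R-agents currently adjacent to it. The goal is to engineer the gadgets so that the equilibrium condition translates precisely into satisfiability of $\varphi$.

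The intended structure is a union of variable gadgets, clause gadgets, and a small ``overflow'' gadget, joined by connecting nodes occupied by stubborn agents. For each variable $x$, I would build a gadget with two symmetric ``slots'' (true-slot and false-slot), each slot being a pocket surrounded by enough stubborn enemies that an R-agent sitting there gets a carefully tuned utility, and the free position in the gadget must lie on one of the two slots in any equilibrium (the third possibility being ruled out by a local improving deviation). This encodes a Boolean assignment. For each clause $C = \ell_1 \vee \ell_2 \vee \ell_3$, I would attach a gadget containing one ``hub'' node whose neighborhood is wired to the corresponding variable gadgets so that an R-agent can be pulled into the clause hub only when at least one of the three literal slots is occupied by an R-agent in the ``satisfying'' position; the hub is designed so that if no literal satisfies $C$, some R-agent in the clause gadget has a strictly improving deviation (either to the hub or to an empty slot inside the gadget), creating a best-response cycle as in Theorem~\ref{thm:non-existence}. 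Conversely, when $\varphi$ is satisfiable, one sets each variable gadget to its slot corresponding to a satisfying assignment, fills the clause gadgets in the prescribed way, and verifies locally that no R-agent strictly prefers any empty node. The reduction for general $k \ge 2$ is obtained by simply using enough auxiliary types among the stubborn agents; we can always restrict the strategic agents to a single type since the hardness claim permits it.

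The main obstacle is the gadget calibration. Since utility is a ratio $f_i/(f_i+e_i)$, small local changes in neighborhood composition can produce surprising improving deviations that leak between gadgets; for example, an R-agent leaving its variable gadget could happen to land in an empty clause node and get a strictly higher ratio than intended. To prevent this I would (i)~insulate each gadget by long stubborn-enemy paths and low-degree connectors so that cross-gadget deviations strictly decrease utility, (ii)~make the ``good'' utility values inside variable and clause gadgets sharply different from every possible deviation target, padding degrees with stubborn enemies to separate the rationals $f/(f+e)$ that arise, and (iii)~carefully check the one case the non-existence proof of Theorem~\ref{thm:non-existence} already exploits: an empty node adjacent to a ``hungry'' friend. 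Ensuring that in the yes-instance no agent is hungry, while in the no-instance at least one always is, is where the bulk of the case analysis will go, and fine-tuning the stubborn-agent counts so the arithmetic works uniformly for every $k \ge 2$ is the step most likely to need several iterations.
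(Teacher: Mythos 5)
Your NP-membership argument is fine, and your high-level strategy---force a ``no'' instance of the source problem to trigger a local best-response cycle while a ``yes'' instance admits a stable placement---is exactly the right shape. But what you have written is a plan, not a proof: the variable gadgets, clause gadgets, and the ``overflow'' mechanism are never constructed, and you yourself flag the calibration of the utility ratios as the unresolved core of the argument. That calibration is not a routine detail here; because utilities are ratios $f/(f+e)$, the entire hardness argument lives in choosing the stubborn-agent counts so that the intended deviations are strictly improving and all unintended cross-gadget deviations are strictly worse, and none of that is exhibited. As it stands, the reduction cannot be checked.

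For comparison, the paper reduces from \textsc{Clique} rather than 3-SAT, and this choice dissolves most of the difficulties you anticipate. All $s$ strategic agents are red and of one type. The topology has three disjoint components: $G_1$ is the input graph $H$ with $s-2$ stubborn blue agents adjacent to every node (so an agent sitting on a clique of size $s$ gets $\frac{s-1}{2s-3}=\frac12+\frac1{4s-6}$, while any agent missing a clique neighbor gets at most $\frac12$); $G_2$ is a complete bipartite ``safe haven'' whose $s-2$ free slots each offer exactly $\frac12+\frac1{4s}$, strictly between those two values; and $G_3$ is a small three-node gadget ($x,y,z$ surrounded by fixed stubborn agents) whose utilities form a cycle $z\to x\to y\to z$ of improving moves whenever exactly two agents occupy it. If $H$ has an $s$-clique, all agents sit on it and are stable. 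If not, at least one agent in $G_1$ drops to utility $\le\frac12$ and flees to $G_2$; but $G_2$ has only $s-2$ slots, so two agents are forced into $G_3$, where no equilibrium exists. This is precisely your ``overflow'' idea made concrete, with the disjointness of the components doing the insulation work you planned to achieve with long stubborn paths. If you want to complete your own writeup, I would strongly suggest switching your source problem to \textsc{Clique}: the clique condition translates directly into ``every strategic agent sees $s-1$ friends,'' which is the one statement the fractional utility measures natively, whereas a 3-SAT clause gadget would force you to encode a disjunction into a single ratio comparison, which is where your calibration will keep breaking.
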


\begin{proof}
We give a proof for $k=2$; it is straightforward to extend it to $k\ge 2$.
We will use a reduction from the {\sc Clique} problem.
An instance of this problem is an undirected graph $H=(X, Y)$ and an integer $s$;
it is a yes-instance if $H$ has a complete subgraph of size $s$.
Given an instance $\langle H, s\rangle$ of {\sc Clique}
with $H=(X, Y)$, we assume without loss of generality that $s \ge 5$
and construct an instance of our problem as follows:
\begin{itemize}
\item There are two agent types: red and blue.
\item There are $s$ strategic red agents; all remaining agents are stubborn.
We will describe the stubborn agents and their locations when defining the topology.
\item The topology $G=(V,E)$ consists of three disjoint components
$G_1$, $G_2$, and $G_3$ such that
\begin{itemize}
\item
$G_1=(V_1, E_1)$, where
$V_1=X \cup W$, $|W|=s-2$, $E_1=Y \cup \{\{v,w\}: v \in X, w\in W\}$.
There is a stubborn blue agent 
at each node $w\in W$;
\item
$G_2$ is a complete bipartite graph with parts $L$ and $R$,
$|L|=s-2$, $|R|=4s$. Of the $4s$ nodes in $R$,
$2s+1$ nodes are occupied by red agents and 
$2s-1$ nodes are occupied by blue agents;
\item
$G_3$ has three empty nodes, denoted $x$, $y$, and $z$, 
and $121$ nodes --- $41$ red and $80$ blue --- occupied by stubborn agents.
There is an edge between nodes $x$ and $y$; also,
$x$ is connected to $1$ red agent and $2$ blue agents;
$y$ is connected to $41$ red agents and $80$ blue agents, and
$z$ is connected to $5$ red agents and $7$ blue agents.
\end{itemize}
\end{itemize}

Note that a strategic red agent obtains a utility of 
$\frac{2s+1}{4s} = \frac{1}{2} + \frac{1}{4s}$
by choosing an available node in $G_2$ and a utility of
$\frac{5}{12}$ by choosing $z$.
If she chooses $x$, her utility is
$\frac{1}{3}$ if $y$ is unoccupied and
$\frac{1}{2}$ otherwise.
Similarly, if she chooses $y$, her utility is
$\frac{41}{121}$ if $x$ is unoccupied and
$\frac{42}{122}$ otherwise;
note that $\frac{1}{3} < \frac{41}{121} < \frac{42}{122}< \frac{5}{12}$.

Now, suppose that $G$ contains a clique of size $s$. If strategic red agents
occupy the nodes of that clique, the utility of each such agent is 
$\frac{s-1}{(s-1)+(s-2)} = \frac{1}{2} + \frac{1}{4s-6}$.
Thus, by our choice of parameters, no agent has a profitable deviation.

On the other hand, suppose that $G$ does not contain a clique of size $s$.
Assume for the sake of contradiction that there is an equilibrium assignment $\vv$.

Suppose first that in $\vv$ some strategic agents
are located in $G_1$. It cannot be the case that each of them 
is adjacent to $s-1$ friends, as this would mean that their locations 
form a clique of size $s$. Hence, at least one of these agents is adjacent 
to at most $s-2$ friends. As this agent is also adjacent to the $s-2$ stubborn blue agents 
in $W$, her utility is at most $\frac{1}{2}$.
By our choice of parameters, all unoccupied nodes of $G_2$ offer a higher utility, namely,  
$\frac{1}{2} + \frac{1}{4s}$. Thus, if there are strategic agents in $G_1$, 
all $s-2$ nodes of $G_2$ that are available to strategic agents 
must be occupied. But then, there are at most two strategic agents
in $G_1$, which means that their utility 
is at most $\frac{1}{s-1} < \frac{1}{3}$ (recall that we assume that $s\ge 5$).
This leads to a contradiction, as these strategic agents would be 
better off moving to $G_3$ where their utility would be at least $\frac{1}{3}$.

Therefore, in equilibrium no strategic agent can be located at a node of $G_1$.
Further, since all unoccupied nodes of $G_2$ always offer 
more utility than any unoccupied nodes of $G_3$ can offer,
in equilibrium all nodes of $G_2$ are occupied, and  
the two remaining strategic agents must be in $G_3$, 
with one of $x$, $y$, and $z$ left empty.

Suppose that $z$ is empty. Then
the agent located at $y$ can increase
her utility from $\frac{42}{122}$ to $\frac{5}{12}$ by moving to $z$, a contradiction.
If $y$ is empty, the agent located at $x$ can increase
her utility from $\frac{1}{3}$ to $\frac{41}{121}$ by moving to $y$, a contradiction.
Finally, if $x$ is empty, the agent located at $z$ can increase
her utility from $\frac{5}{12}$ to $\frac{1}{2}$ by moving to $x$, a contradiction.
As we have exhausted all possibilities, it follows that if $G$ does not have a clique
of size $s$, then there is no equilibrium assignment.
\end{proof}

The proof of Theorem~\ref{thm:eq-hardness} can be adapted to show 
that maximizing social welfare in Schelling games is NP-hard as well.

\begin{theorem}\label{thm:opt-hardness}
For every $k\ge 2$, given a $k$-typed Schelling game $I$ and a rational value $s$,
it is {\em NP}-complete to decide whether $I$ admits an assignment
with social welfare at least $s$.
The hardness result holds even if $k=2$, 
all strategic agents belong to one type, and the other type consists
of a single stubborn agent.
\end{theorem}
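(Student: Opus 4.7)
The plan is to adapt the \textsc{Clique} reduction used in Theorem~\ref{thm:eq-hardness} to a much simpler construction that uses only a single stubborn agent and whose criterion to be decided is the social welfare rather than the existence of an equilibrium. Given a \textsc{Clique} instance $\langle H=(X,Y), s\rangle$, I will build a two-typed Schelling game on the topology with vertex set $V=X\cup\{w\}\cup P$ and edge set $Y\cup\{\{v,w\}:v\in X\}$, where $P$ is a (possibly empty) set of isolated padding vertices added solely to ensure $|V|>|R|+|S|$. The agent set consists of $s$ strategic red agents plus a single stubborn blue agent placed at $w$, and the social-welfare threshold (the value~$s$ in the theorem statement) will be set to $s-1$.

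The easy direction is that if $H$ has an $s$-clique $T$, placing the strategic red agents on $T$ makes every agent adjacent to the single enemy $w$ and to exactly $s-1$ friends, giving utility $\tfrac{s-1}{s}$ each and total social welfare $s-1$. For the converse I would first dispose of placements that use any padding vertex: such a vertex contributes utility $0$, and the remaining agents in $X$ can together contribute at most $s-2<s-1$, so it suffices to consider placements $T\subseteq X$ with $|T|=s$. Writing $d_T(v)=|N_H(v)\cap T|$ and $f(x)=\tfrac{x}{x+1}$, the social welfare for such placements equals $\sum_{v\in T}f(d_T(v))$.

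The core step is to show that this sum is strictly less than $s-1$ whenever $H[T]$ is not complete. Since $f$ is strictly concave and strictly increasing on $[0,\infty)$ (and its extension to $f(0)=0$ matches the problem's convention for agents with no friends), Jensen's inequality gives
\[
\sum_{v\in T} f(d_T(v)) \;\le\; s\,f(\bar d) \;\le\; s\,f(s-1) \;=\; s-1,
\]
where $\bar d=\tfrac{1}{s}\sum_{v\in T}d_T(v)\le s-1$, and equality holds throughout iff every $d_T(v)$ equals $s-1$, i.e.\ iff $H[T]$ is an $s$-clique. NP-membership is immediate (compute the social welfare of a guessed assignment in polynomial time), so combining the two directions completes the reduction. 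The main obstacle I anticipate is the strictness in this Jensen step: distinguishing the ``not all $d_T(v)$ equal'' sub-case (strict first inequality, via strict concavity) from the ``all $d_T(v)$ equal but below $s-1$'' sub-case (strict second inequality, via strict monotonicity) requires a short but careful case split to conclude that social welfare $s-1$ is attained exclusively at $s$-cliques of $H$.
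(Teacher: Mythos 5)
Your construction is exactly the paper's: the same \textsc{Clique} reduction with $s$ strategic red agents, one stubborn blue agent at $w$ joined to all of $X$, and threshold $s-1$, so the proposal is correct and takes essentially the same approach. The only difference is that the paper dispenses with Jensen's inequality entirely: since $d_T(v)\le s-1$ for every $v\in T$ and $f(x)=\tfrac{x}{x+1}$ is strictly increasing, each term is individually at most $\tfrac{s-1}{s}$ with equality iff $d_T(v)=s-1$, which gives the characterization termwise and avoids the equality-case split you were worried about.
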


\begin{proof}
We modify the reduction in the proof of Theorem~\ref{thm:eq-hardness}
by removing the gadgets $G_2$ and $G_3$ and replacing the set $W$
with a single node $w$.
That is, given an instance $\langle H, s\rangle$ of {\sc Clique},
we construct an instance of our social welfare maximization problem as follows:
\begin{itemize}
\item There are two agent types: red and blue.
\item There are $s$ strategic red agents and one stubborn blue agent.
\item The topology $G=(V,E)$ is defined so that $V=X \cup \{w\}$ and $E=Y \cup \{\{v,w\}: v \in X\}$.
\item The single stubborn blue agent is positioned at node $w$.
\end{itemize}
Note that the utility of a red agent $p$ in an assignment $\vv$ is $\frac{r}{r+1}$,
where $r$ is the number of red agents that $p$ is adjacent to in $\vv$; the function
$\frac{r}{r+1}$ is increasing in $r$ and we have $r\le s-1$ for any assignment.
Hence, the social welfare of $s-1$ can be achieved if and only if
the red agents can be placed in $G$ so that each agent
is adjacent to every other red agent, in which case the utility of
each strategic agent is $\frac{s-1}{s}$;
this is possible if and only if $H$ contains a clique of size $s$.
\end{proof}

On the positive side, for small $k$ we can efficiently decide whether an equilibrium exists
if the topology $G$ is a tree. Our algorithm is based on
dynamic programming: it selects an arbitrary node of $G$ to be the root, 
and then for every node $v$ of $G$,
it fills out a multidimensional table
whose dimension is linear in the number of types, proceeding from the leaves to the root.
It decides whether the given instance admits
an equilibrium by scanning the table at the root node.
The details of the algorithm are given in the appendix. 

\begin{theorem}\label{thm:existence-tree}
Given a $k$-typed Schelling game $I$ with $n$ agents, where the topology $G$ is a tree,
we can decide whether $I$ admits an equilibrium (and compute one if it exists)
in time $\mathit{poly}(n^k)$,
i.e., this problem lies in the complexity class {\em XP} 
with respect to the number of types $k$.
\end{theorem}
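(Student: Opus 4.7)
The plan is to root the tree at an arbitrary vertex and run a bottom-up dynamic program, exploiting that every utility in a Schelling game is a rational $a/b$ with $0\le a\le b\le n-1$, so only $O(n^2)$ distinct values occur. This lets me summarize a subtree's ``utility profile'' with vectors of length linear in $k$.

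For each vertex $v$ I would maintain a table indexed by $(l_v, s, \vec n, \vec\alpha, \vec\tau)$: the label placed at $v$ (a type in $[k]$ or ``empty''), a ``promise'' $s$ for the label of $v$'s parent, the count vector $\vec n \in \{0,\dots,n\}^k$ of strategic agents of each type placed in the subtree $T_v$, and, for each type $t$, the minimum utility $\alpha_t$ achieved by any type-$t$ strategic agent in $T_v$ together with the maximum ``static'' deviation utility $\tau_t = f_t(z)/d(z)$ offered to type $t$ by an empty node $z$ in $T_v$. There are $n^{O(k)}$ states per vertex. Leaves are filled directly from the definition of utility, using $l_v$ and $s$. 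At an internal vertex $v$ I would fold children in one at a time: the count vectors sum, $\vec\alpha$ and $\vec\tau$ are updated by componentwise min/max, and a running tally of each child-label count is kept so that, once the last child has been folded, $v$'s own utility (if $v$ is strategic) or its per-type attractiveness (if $v$ is empty) can be evaluated from the tally and the parent promise $s$, and the resulting value fed back into $\vec\alpha$ or $\vec\tau$.

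The principal obstacle is the two-sided dependency of $v$'s constraints on both its children and its parent; I resolve it by making $s$ a DP parameter, at only an $O(k)$ blow-up. A further subtlety is that when an agent $i$ deviates to an empty $z$ adjacent to $i$, the vacating of $i$'s current node changes the neighborhood of $z$, so the static $\tau_t$ may overestimate the true post-deviation utility. Since in a tree any such adjacency is between a parent and a child, the correction can be applied locally during the fold at the deeper of the two vertices, and the inequality $(f-1)/(d-1)\le f/d$ (valid whenever $f\le d$) guarantees that the static bookkeeping is conservative in the right direction for every remaining, non-adjacent, deviation. At the root I scan for a feasible entry at the overall count vector whose $(\vec\alpha,\vec\tau)$ pair, combined with the locally verified adjacency checks, certifies that no strategic agent has a profitable jump; a witness assignment is recovered by standard backtracing. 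With $n$ vertices, $n^{O(k)}$ entries per vertex, and polynomial work per transition, the total running time is $\mathrm{poly}(n^k)$, placing the problem in XP with respect to $k$.
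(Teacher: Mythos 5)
Your overall architecture is the same as the paper's: root the tree, observe that utilities live in a set of size $O(n^2)$, and run a bottom-up DP whose states record, per type, subtree occupancy counts, a minimum utility attained by agents already placed, and a maximum utility offered by empty nodes, folding children in one at a time for $n^{O(k)}$ states per vertex. The one substantive deviation --- carrying a ``parent promise'' $s$ instead of deferring the evaluation of $v$'s own utility to the parent's fold via a child-label count vector, as the paper does --- is an equivalent and equally valid mechanism.

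There is, however, one step as described that would fail: your handling of the deviation of an agent at a child $c$ of $v$ up to an \emph{empty} $v$. You claim this adjacency correction ``can be applied locally during the fold at the deeper of the two vertices,'' but at the moment $c$ is folded into $v$, the labels of $v$'s remaining children are not yet fixed, so the post-deviation utility $\frac{x_t-1}{\sum_s x_s-1}$ (where $x_t$ counts type-$t$ occupants of $v$'s neighborhood) cannot be evaluated; the check must be deferred until after all children are folded. Completing it then requires knowing, for each type $t$, the minimum utility among type-$t$ agents \emph{sitting at children of $v$} separately from those deeper in the subtree, since only the former receive the $(x_t-1)/(\sum_s x_s-1)$ correction while the latter are compared against $x_t/\sum_s x_s$. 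Your single aggregated $\alpha_t$ conflates the two, and using it for both checks makes the adjacent-agent test too strict (you could falsely reject equilibria). The paper resolves exactly this by splitting its lower-bound vector into $\ulb_X$ and $\ulb_{X^\dag}$ (for agents below the children versus at the children); the fix costs only another $O(k)$ indices and leaves your $\mathit{poly}(n^k)$ bound intact, but as written the bookkeeping is insufficient.
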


By slightly modifying our algorithm, we can compute
an assignment that maximizes the social welfare, 
either among all assignments or among equilibria.

\begin{corollary}\label{thm:optimal-tree}
Given a $k$-typed Schelling game, where $G$ is a tree,
the problems of computing an equilibrium with maximum social welfare
or a socially optimal assignment are in {\em XP} with respect to $k$.
\end{corollary}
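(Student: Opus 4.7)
The plan is to augment the dynamic program from Theorem~\ref{thm:existence-tree} with a single additional scalar field that tracks accumulated social welfare, and then to read off the desired optimum at the root. First I would recall the structure of the original DP: after rooting the tree arbitrarily, each node $v$ carries a table indexed by a state that records the assignment at $v$ (empty, occupied by a stubborn agent of some fixed type, or occupied by a strategic agent of some type $t \in [k]$), together with aggregated statistics of the subtree rooted at $v$ such as, for each type, the number of strategic agents of that type already placed in the subtree and the number of empty nodes available. The state has dimension $O(k)$, giving table size $\mathit{poly}(n^k)$, and feasibility (no strategic agent inside the subtree wants to jump to an available empty node inside the subtree) is propagated bottom-up.

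Next, I would enlarge each entry to also store the \emph{maximum} cumulative utility of strategic agents located in the subtree that is compatible with that state. Because utilities are purely local -- the value $u_i(\vv)$ depends only on the types of $i$'s neighbors -- the welfare decomposes additively over agents. In a rooted tree the neighbors of $v$ are its children and its parent, so the contribution of an agent sitting at $v$ is fully determined by the type at $v$, the types at $v$'s children (already recorded in the children's DP entries), and the type at $v$'s parent. I would accommodate the latter dependence by postponing the attribution of $v$'s own utility: $v$'s utility is added to the welfare accumulator at the moment $v$ is absorbed into its parent's table, when the parent's type has been fixed. With this convention, the standard "combine children" update of the DP carries through verbatim, with the new scalar field updated by a simple sum.

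Finally, I would handle the two corollary statements separately. For the socially optimal assignment, I would drop the equilibrium-feasibility bookkeeping from the state, keeping only the statistics needed to evaluate welfare, and then return the maximum-welfare entry at the root. For the optimal equilibrium, I would keep the equilibrium-feasibility flags exactly as in Theorem~\ref{thm:existence-tree} (so that the final acceptance check -- no strategic agent wants to deviate to any remaining empty node -- is still enforced) and, among the entries at the root that pass feasibility, output the one of maximum welfare. In both cases the table grows only by a single numeric field, so the running time remains $\mathit{poly}(n^k)$, placing both problems in \textbf{XP} with respect to $k$.

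The main obstacle, and the reason the modification is more than a cosmetic change, is ensuring that welfare really can be tallied purely locally in the presence of the parent-dependence at each node; once one commits to the "charge $v$'s utility when combining $v$ with its parent" convention described above, the remaining details (handling stubborn agents, empty nodes, and the base case at the leaves) are routine extensions of the algorithm behind Theorem~\ref{thm:existence-tree}.
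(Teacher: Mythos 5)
Your proposal is correct and matches the paper's intent: the paper proves this corollary only by remarking that the dynamic program of Theorem~\ref{thm:existence-tree} can be "slightly modified," and your augmentation (store the maximum subtree welfare per state, charge each node's own utility at the moment it is merged into its parent's table since its utility there is determined by the parent's type together with the child counts $\kbr$ already in the state, and then maximize at the root over all states, or over states passing the equilibrium acceptance check) is exactly the right way to carry that out while keeping the table size $\mathit{poly}(n^k)$.
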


We have not been able to determine whether the problem of computing an equilibrium assignment
is fixed-parameter tractable with respect to the number of types;
we leave this question for future work.


\section{Price of Anarchy and Stability}\label{sec:poa-pos}
In this section, we investigate the loss in social welfare caused by strategic behavior, 
as measured by the price of anarchy and the price of stability; unless otherwise specified, the topology is assumed to be a connected graph.\footnote{We can easily observe that the price of anarchy can be unbounded for not connected topologies. For instance, consider a topology with one isolated node and a connected subgraph $W \subset V$ such that $|W|=n$. Then, any assignment of the $n$ agents at the nodes of $W$ is an equilibrium. Hence, there might exist equilibria with zero social welfare.}
We start by establishing bounds on the price of anarchy for instances with no stubborn agents.

\begin{theorem}\label{thm:poa-strategic}
For $k$-typed Schelling games
with no stubborn agents and $n$ strategic agents,
the PoA
\begin{itemize}
\item can be unbounded for each $k\ge 2$;
\item is $\Theta(n)$ when there are at least two agents per type;
\item is $k+o(1)$ if each type has the same number of agents.
\end{itemize}
\end{theorem}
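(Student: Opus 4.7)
The theorem contains three separate claims, so I would prove them in turn, combining lower-bound constructions with matching upper-bound arguments. All upper bounds start from the trivial fact that any assignment has social welfare at most $n$, since each agent's utility lies in $[0,1]$. For the first part, unboundedness is obtained via a star construction: take $K_{1,n}$ with center $c$ and $n$ leaves, and let the agent set consist of one agent in each of $k-1$ ``singleton'' types together with $n-k+1$ agents of a ``majority'' type. Placing a singleton at $c$ and distributing the remaining agents on the leaves (with one leaf empty) produces an equilibrium in which every agent has utility $0$: each leaf sees only $c$, which is of a singleton type different from itself, and deviations to the empty leaf also yield utility $0$. An optimal assignment instead places a majority agent at $c$, achieving $\SW = \Omega(n)$, so $\PoA$ is unbounded for every $k \geq 2$.

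For the $\Theta(n)$ bound under the ``at least two agents per type'' assumption, the lower bound is a variant of the above: take $k=2$ on the same star, with two agents of a minority type (one at $c$ and one at a leaf $\ell_1$) and $n-2$ agents of the majority type at the remaining leaves (one leaf empty). The minority agent at $\ell_1$ has utility $1$, the one at $c$ has utility $1/(n-1)$, and every majority agent has utility $0$, so the equilibrium has $\SW = 1 + o(1)$; the optimum again places a majority agent at $c$ and achieves $\SW = n - O(1)$, yielding $\PoA = \Theta(n)$. For the matching upper bound $O(n)$, it suffices to prove that every equilibrium on a connected topology with at least two agents per type has social welfare at least a positive constant. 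I would argue this by showing that under these hypotheses some empty node must be adjacent to a strategic agent, and then the equilibrium condition forces at least one agent to have utility bounded away from $0$ (otherwise some same-type agent could profitably deviate to that empty node).

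For the third bullet, I plan to use a per-agent averaging argument. For the upper bound $\PoA \leq k + o(1)$, the key statement is that in any equilibrium with $n/k$ agents per type, the average strategic-agent utility is at least $1/k - o(1)$. I would attempt this by summing the best-response inequalities across all agent/empty-node pairs: since each agent has $n/k - 1$ potential friends among $n - 1$ other agents, a counting argument on shared neighborhoods should push the average utility towards $1/k$ for large $n$. For the matching lower bound I would construct an instance on a topology (for example, a carefully tuned ``blown-up'' star or a $k$-partite-like graph) in which an equilibrium forces the types to be evenly mixed around each neighborhood while an optimum isolates each type in its own cluster, so that the per-agent utility drops from $\approx 1$ in the optimum to $\approx 1/k$ in the equilibrium. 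The main obstacle is the upper bound of the third bullet: translating the pointwise equilibrium conditions into a sharp $1/k$ average-utility bound appears to require a delicate amortization argument that goes beyond the analyses of the first two bullets.
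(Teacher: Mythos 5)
Your first two lower-bound constructions and the first bullet's argument match the paper's (star topologies with a singleton-type or minority-type agent at the center), and your overall strategy of ``find an empty node with an occupied neighbor and exploit the no-deviation condition'' is the right one. But there are two genuine gaps. First, for the upper bound of the second bullet, your plan to show that \emph{one} agent has utility bounded away from $0$ does not work: if the empty node $v$ is adjacent to $s$ agents of $s$ pairwise distinct types (each type having exactly two agents), the equilibrium condition only guarantees that each of the $s$ off-node partners has utility at least $1/s$, which can be as small as $1/n$; no single agent need have constant utility. The correct statement is an \emph{aggregate} one: letting $v$ be an empty node adjacent to $x_i$ agents of type $T_i$ with $s=\sum_i x_i$, every type-$T_i$ agent's equilibrium utility is at least $x_i/s$ (or $\tfrac{x_i-1}{s-1}$ if she neighbors $v$), and summing over all agents gives
\[
\SW(\vv)\;\ge\;\sum_{i}\Bigl((n_i-x_i)\tfrac{x_i}{s}+x_i\tfrac{x_i-1}{s-1}\Bigr)\;\ge\;\tfrac{1}{s}\sum_i (n_i-1)x_i\;\ge\;1
\]
whenever $n_i\ge 2$ for all $i$, which yields $\PoA\le n$.

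Second, the third bullet is where the proposal is essentially missing. The upper bound you flag as ``the main obstacle'' requiring amortization over all agent/empty-node pairs is in fact the \emph{same} single-empty-node computation as above with $n_i=n/k$: the sum becomes $\tfrac{1}{s}\sum_i(\tfrac{n}{k}-1)x_i=\tfrac{n}{k}-1$, so $\PoA\le \tfrac{n}{n/k-1}=k+o(1)$; no counting over shared neighborhoods is needed. The matching lower bound also needs an explicit construction, which is the genuinely delicate part: the paper uses $k$ cliques of size $k\ell$, each attached to $k$ auxiliary nodes of degree $\ell$ and to a single hub node $y$; leaving $y$ empty and mixing the types evenly within each clique is an equilibrium precisely because every agent already gets utility $1/k$, which equals what $y$ would offer, while the optimum segregates the types into the cliques for utility $1$. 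A generic ``blown-up star'' will not automatically stabilize the evenly-mixed assignment, so without such a gadget the $k$ lower bound is not established.
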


We prove each statement separately.

\begin{lemma}
For $k$-typed Schelling games
with no stubborn agents and $n$ strategic agents,
the price of anarchy can be unbounded for each $k\ge 2$.
\end{lemma}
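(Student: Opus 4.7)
The plan is to exhibit, for each $k \ge 2$, a single instance whose price of anarchy is infinite; this will follow once I produce an equilibrium $\vv$ with $\SW(\vv) = 0$ alongside some assignment with strictly positive social welfare, since the supremum $\PoA(I) = \sup_{\vv \in \EQ(I)} \SW(\vv^*(I))/\SW(\vv)$ then diverges.

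I would take the topology to be a star with center $c$ and $k + 1$ leaves $\ell_1, \ldots, \ell_k, e$. There would be no stubborn agents; the strategic agents consist of two of type $T_1$ and one of each of the types $T_2, \ldots, T_k$, giving $n = k + 1$ in total. The candidate equilibrium $\vv$ would place the unique $T_2$-agent at $c$, the two $T_1$-agents at $\ell_1, \ell_2$, and the unique $T_j$-agent at $\ell_j$ for $j = 3, \ldots, k$, leaving $e$ empty.

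The main claim to verify is that $\vv$ is an equilibrium with $\SW(\vv)=0$. Each leaf-agent will have as its only neighbor the $T_2$-agent at $c$, which is an enemy, and the $T_2$-agent itself will have no $T_2$-friends anywhere in the instance. Hence every strategic agent obtains utility $0$, so $\SW(\vv) = 0$. Any possible deviation must target the unique empty node $e$, whose only neighbor is $c$: a non-$T_2$ deviator still faces the enemy at $c$ (utility $0$), while a $T_2$ deviator instead vacates $c$ and ends up with no neighbors (utility $0$). Neither move yields a strict improvement, so $\vv$ is an equilibrium.

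Finally, to exhibit positive optimal welfare, I would swap the roles of the $T_1$- and $T_2$-agents: place one $T_1$-agent at $c$ and the other at $\ell_1$, fill $\ell_2, \ldots, \ell_k$ arbitrarily with the remaining $k-1$ agents, and leave $e$ empty. Then the $T_1$-agent at $\ell_1$ is adjacent only to its $T_1$-friend at $c$ and contributes utility $1$, so $\SW(\vv^*(I)) \ge 1$ and therefore $\PoA(I) = \infty$. The main obstacle in the argument is precisely the design of the zero-welfare equilibrium: the star topology with an enemy singleton at the center is what forces every jump to land either next to that same enemy, or, in the case of the central agent, next to no one at all, making $0$ utility sustainable for every agent simultaneously.
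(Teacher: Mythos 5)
Your proposal is correct and is essentially identical to the paper's proof: both use a star with $k+2$ nodes, two agents of type $T_1$ and one of each other type, and observe that placing a singleton-type agent at the center yields a zero-welfare equilibrium while placing a $T_1$-agent there yields positive welfare. Your write-up just verifies the equilibrium conditions more explicitly than the paper does.
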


\begin{proof}
Consider an instance with two agents of type $T_1$ and one agent of every other type $T_i$, $i \geq 2$.
The topology is a star with $k+2$ nodes.
Then, any assignment where the center node is occupied by an agent of type $T_i$ with $i \geq 2$,
is an equilibrium with zero social welfare. In contrast,
any assignment where the center node is occupied by an agent of type $T_1$ is again an
equilibrium, but the social welfare is now strictly positive (since the two agents
of type $T_1$ are connected to each other).
\end{proof}

\begin{lemma}
For $k$-typed Schelling games
with no stubborn agents and $n$ strategic agents,
in which there are at least two agents
per type, the price of anarchy is $\Theta(n)$ for every $k\ge 2$. 
\end{lemma}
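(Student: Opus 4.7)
The plan is to establish matching $O(n)$ and $\Omega(n)$ bounds on the price of anarchy. Since every agent has utility at most $1$, we immediately have $\SW(\vv^*(I)) \leq n$ for any instance, so for the upper bound it suffices to show that every equilibrium $\vv$ satisfies $\SW(\vv) \geq 1/k$; this yields $\PoA \leq kn = O(n)$ for each fixed $k$. For the lower bound I will exhibit an instance on a star topology whose worst equilibrium achieves $\SW = 1 + o(1)$ while the optimum achieves $\SW = \Theta(n)$.

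For the upper bound, let $\vv$ be an equilibrium. Since the topology is connected and $|V|>n$, some empty node $w$ has an occupied neighbor. Let $d$ be the number of agents at nodes in $N_G(w)$, and let $T^*$ be the type with the most agents at $N_G(w)$, with count $c^* \geq d/k$ by pigeonhole. If some member $b$ of $T^*$ is not at a node in $N_G(w)$, then $b$ can deviate to $w$ and obtain utility $c^*/d \geq 1/k$, so equilibrium gives $u_b(\vv) \geq 1/k$ and hence $\SW(\vv) \geq 1/k$. Otherwise all $c^* = |T^*| \geq 2$ members of $T^*$ sit in $N_G(w)$; each such $b$ can deviate to $w$ and obtain utility $(c^*-1)/(d-1)$, so $u_b(\vv) \geq (c^*-1)/(d-1)$. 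Summing over the $c^*$ members of $T^*$ and using $d \leq k c^*$ (each type contributes at most $c^*$ members to $N_G(w)$ and there are $k$ types), we get $\SW(\vv) \geq c^*(c^*-1)/(d-1) \geq c^*(c^*-1)/(k c^*-1) \geq 2/(2k-1) \geq 1/k$, where the penultimate inequality follows because $c(c-1)/(kc-1)$ is non-decreasing in $c \geq 2$.

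For the lower bound, I take the star topology with a center $c$ and leaves $\ell_1, \ldots, \ell_n$, with $|T_1|=\cdots=|T_{k-1}|=2$ and $|T_k|=n-2(k-1)$ (and $n$ large enough that $|T_k|\geq 2$). Consider the assignment that places one $T_1$-agent at $c$, the other $T_1$-agent at $\ell_1$, and the remaining $n-2$ agents at $\ell_2, \ldots, \ell_{n-1}$, leaving $\ell_n$ empty. The center agent has utility $1/(n-1)$, the agent at $\ell_1$ has utility $1$, and every other leaf agent has utility $0$. This is an equilibrium: any leaf agent jumping to $\ell_n$ still sees only the center and hence has the same utility; the center agent jumping to $\ell_n$ would make the old center empty, so her new utility would drop to $0$. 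Thus $\SW = 1 + 1/(n-1) = O(1)$. On the other hand, placing a $T_k$-agent at $c$ and the remaining $|T_k|-1 = n-2k+1$ agents of $T_k$ on the leaves yields social welfare at least $n-2k+1 = \Theta(n)$ for fixed $k$, so $\PoA \geq \Omega(n)$.

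The main obstacle will be the case analysis in the upper bound, specifically showing that the seemingly weak per-agent bound $(c^*-1)/(d-1)$ in Case B still yields a uniform $\Omega(1/k)$ lower bound on total social welfare; the essential ingredient is the degree bound $d \leq k c^*$, which together with summing over the $c^* \geq 2$ members of $T^*$ boosts the estimate from the pointwise value to $2/(2k-1)$.
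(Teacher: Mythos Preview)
Your proof is correct. The lower-bound construction on the star is identical to the paper's. For the upper bound, however, you take a different route: you focus on the single majority type $T^*$ around an empty node and split into two cases, obtaining $\SW(\vv)\ge 1/k$ and hence $\PoA\le kn$. The paper instead sums the deviation lower bounds over \emph{all} agents of \emph{all} types simultaneously: with $x_i$ agents of type $T_i$ adjacent to the empty node and $s=\sum_i x_i$, one gets
\[
\SW(\vv)\ \ge\ \sum_{i\in[k]}\left((n_i-x_i)\frac{x_i}{s}+x_i\,\frac{x_i-1}{s-1}\right)
\ \ge\ \frac{1}{s}\sum_{i\in[k]}(n_i-1)x_i\ \ge\ 1,
\]
using only $n_i\ge 2$. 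This yields the sharper bound $\PoA\le n$, uniform in $k$, and avoids any case analysis or the auxiliary monotonicity of $c(c-1)/(kc-1)$. Your argument still suffices for the $\Theta(n)$ claim since $k$ is fixed in the lemma statement, but the paper's aggregation is both shorter and gives the better constant.
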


\begin{proof}
For the lower bound, consider an instance with $|T_i|=2$ for $i=1, \dots, k-1$
and $|T_k|=n-2(k-1)$.
The topology is a star with $n+1$ nodes.
Then any assignment where an agent that belongs to one of the first $k-1$ types 
is at the center of the star is an
equilibrium with social welfare $1+1/(n-1) \leq 2$, while for any assignment
where an agent of type $T_k$ is at the center of the star
the social welfare is $n-2(k-1)-1+1/(n-1) \geq n-2k$.
Hence, the price of anarchy is at least $n/2 - k$.

For the upper bound, consider a $k$-typed instance
with $n_i \geq 2$ agents of each type $T_i$,
so that $n = \sum_{i \in [k]} n_i$. We will show that
the social welfare of any equilibrium assignment is at least $1$.
This implies our bound on the price of anarchy, since the optimal social welfare is at most $n$.

Let $\vv$ be an arbitrary equilibrium assignment.
Recall that we assume that the number of available nodes exceeds the number of agents and the topology is connected,
so there must exist some empty node $v$ with at least one non-empty neighbor. Suppose that
$v$ is connected to $x_i$ agents of type $T_i$, for $i \in [k]$, and let $s = \sum_{i \in [k]}x_i$.
By deviating to $v$, an agent of type $T_i$
would get utility $\frac{x_i}{s}$
if she is not connected to $v$, and utility $\frac{x_i-1}{s - 1}$ otherwise; again, for readability, we use the convention that $\frac00=0$.
Since at equilibrium no agent has an incentive to deviate,
her utility is at least the utility she would get by deviating to $v$. Therefore,
the social welfare at equilibrium is at least
\begin{align*}
\SW(\vv) &\geq \sum_{i \in [k]} \left( (n_i - x_i)\frac{x_i}{s} +
x_i \frac{x_i-1}{s - 1}  \right) \\
&\geq \frac{1}{s}\sum_{i \in [k]} (n_i-1)x_i \geq 1,
\end{align*}
where the last inequality holds since $n_i \geq 2$ for every $i \in [k]$.
This completes the proof.
\end{proof}

\begin{figure}[t]
\center
\includegraphics[scale=0.45]{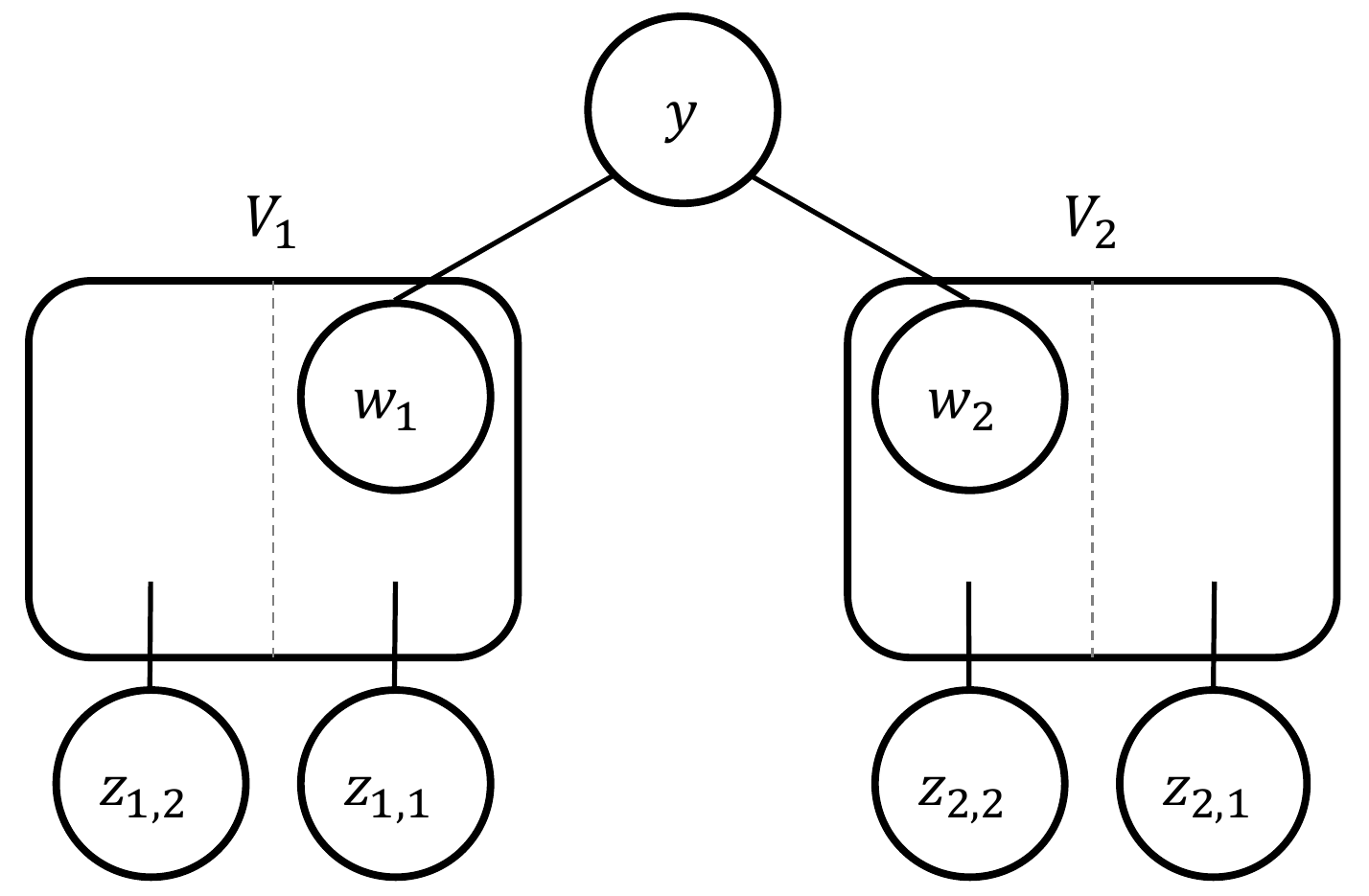}
\caption{The topology used in the proof of Theorem~\ref{lem:poa-strategic} for $k=2$.}
\label{fig:poa-strategic}
\end{figure}

\begin{lemma}\label{lem:poa-strategic}
For $k$-typed Schelling games
with no stubborn agents and $n$ strategic agents,
the PoA
is $k+o(1)$ if each type has the same number of agents.
\end{lemma}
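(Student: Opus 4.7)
The plan is to establish matching upper and lower bounds that sandwich $\PoA$ between $k$ and $k+o(1)$ as $n\to\infty$.

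For the upper bound, I would rerun the averaging argument from the previous lemma, now specialized to $n_i = n/k$ for every $i$. Given an arbitrary equilibrium $\vv$, I would pick an empty node $v$ with at least one non-empty neighbor (which exists since the topology is connected and $|V|>n$), let $x_i$ denote the number of type-$T_i$ agents adjacent to $v$, and set $s = \sum_{i\in[k]} x_i$. The equilibrium inequality for a deviation to $v$ gives $u_j(\vv) \ge x_i/s$ for every type-$T_i$ agent $j$ not adjacent to $v$ and $u_j(\vv) \ge (x_i-1)/(s-1)$ for every such $j$ adjacent to $v$. Summing and simplifying exactly as before then yields
\[
\SW(\vv) \;\ge\; \sum_{i\in[k]} \Bigl[(n/k - x_i)\tfrac{x_i}{s} + x_i\tfrac{x_i-1}{s-1}\Bigr] \;\ge\; \tfrac{1}{s}\sum_{i\in[k]}(n/k-1)\,x_i \;=\; \tfrac{n}{k}-1.
\]
Since $\SW(\vv^*) \le n$, this gives $\PoA(I) \le \tfrac{kn}{n-k} = k + O(k^2/n) = k + o(1)$.

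For the lower bound, I would construct a family of instances whose PoA tends to $k$. The topology (shown in Figure~\ref{fig:poa-strategic} for $k=2$) would have three parts: (i) $k$ disjoint ``pure'' cliques $G_1,\ldots,G_k$, each of size $n/k$, to host a segregated optimum; (ii) a ``mixing region'' $G_*$ on $n$ nodes, designed so that some balanced assignment gives \emph{every} agent utility exactly $1/k$; and (iii) a few ``bridge'' edges joining a distinguished node $a_i\in G_i$ to $k$ specific nodes in $G_*$, chosen so that in the balanced assignment those $k$ nodes are occupied by one agent of each type. A natural candidate for $G_*$ is a $k$-regular graph obtained by superimposing a same-type matching on each type class (giving one friend per agent) and a cross-type matching on each pair of types (giving $k-1$ enemies per agent). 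Then the segregated assignment, placing all type-$T_i$ agents into $G_i$ and leaving $G_*$ empty, gives welfare $n$, while the balanced assignment on $G_*$, with $G_1,\ldots,G_k$ empty, gives welfare $n\cdot 1/k = n/k$; hence $\PoA(I)\ge k$.

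The hard part, which dictates the care in the construction, will be verifying that the balanced assignment is actually an equilibrium---no strategic agent should be able to increase her utility by moving to one of the $n$ empty nodes in $G_1\cup\cdots\cup G_k$. Deviations to a non-bridge node of some $G_i$ are trivially unprofitable (all $G_i$-neighbors are empty, giving utility $0$). The subtle case is a deviation to the bridge node $a_i$: its $k$ neighbors in $G_*$ are balanced across types by construction, so the deviator collects exactly one friend among $k$ neighbors, matching but not exceeding her current utility of $1/k$. This is precisely why $G_*$ must be engineered to yield utility \emph{exactly} $1/k$: a denser choice such as $K_n$ would give in-place utility $(n/k-1)/(n-1)<1/k$, so that deviating to a balanced bridge node would be strictly profitable and break the equilibrium.
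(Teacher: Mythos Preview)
Your upper bound argument is identical to the paper's: both pick an empty node $v$ adjacent to at least one occupied node, lower-bound each agent's utility by what she would get at $v$, and sum to obtain $\SW(\vv)\ge n/k-1$, whence $\PoA\le kn/(n-k)=k+o(1)$.

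Your lower bound takes a genuinely different route. The paper uses a \emph{single} family of $k$ cliques (each of size $k\ell$), attaches $k$ auxiliary leaves to each, and links the cliques through a hub node $y$. The optimal assignment places one type per clique; the bad equilibrium stripes each clique into $k$ blocks of $\ell$ same-type agents (each block anchored by its auxiliary leaf), leaves $y$ empty, and gives every clique agent utility exactly $1/k$ while every auxiliary agent gets utility~$1$. The ratio $k(\ell+1)/(\ell+k)\to k$ as $\ell\to\infty$. Your construction instead uses two disjoint regions---pure cliques for the optimum and a separate $k$-regular ``mixing'' graph $G_*$ for the bad equilibrium---and achieves ratio exactly $k$. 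Conceptually yours is cleaner (the two assignments live on disjoint vertex sets), while the paper's is more economical (only $n+1$ nodes rather than $\approx 2n$).

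There is, however, a gap you should address. The section works under the standing assumption that the topology is connected (see the opening footnote: for disconnected topologies the PoA is trivially unbounded). Your candidate $G_*$---a disjoint union of a same-type perfect matching within each type and a cross-type perfect matching for each pair of types---is $k$-regular but need not be connected; for $k=2$ it is a disjoint union of cycles, and in general the matchings may split $G_*$ into many components, leaving the whole topology disconnected. You need to argue that the matchings can be chosen so that $G_*$ is connected (e.g., for $k=2$ and $n\equiv 0\pmod 4$ one can realize $G_*$ as a single $n$-cycle with vertex pattern $T_1T_1T_2T_2T_1T_1\cdots$); you also implicitly require $n/k$ to be even for the same-type matchings to exist. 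These are fixable, but without them the equilibrium instance does not fit the setting of the lemma.
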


\begin{proof}
For the lower bound, fix $k\ge 2$,
let $\ell \geq 2$ be a parameter and consider an instance with $k(\ell+1)$ agents per type;
altogether there are $n=k^2(\ell+1)$ agents.
The topology consists of $n+1$ nodes and is defined as follows.
There are $k$ cliques $V_1, \dots, V_k$ of size $k\ell$ each, and a node $y$. In each clique
$V_i$ there is a special node $w_i$ that is connected to $y$.
Also, for each $i\in [k]$ there are $k$ auxiliary nodes
$z_{i,1}, \dots, z_{i,k}$; each of these nodes is connected to a distinct set of $\ell$
nodes in $V_i$. Let $z_{i,i}$ be the auxiliary node that is connected to $w_i$.
Figure~\ref{fig:poa-strategic} illustrates this topology for $k=2$.

There is an optimal assignment where
all $k(\ell+1)$ agents of type $T_i$ are placed at the nodes of clique
$V_i$ and the corresponding auxiliary nodes, so that all agents are connected only to agents of the
same type and have maximum utility $1$. Therefore,
the optimal social welfare is $k^2(\ell+1)$.

In contrast, consider the following equilibrium assignment: node $y$ is empty,
and for each $i, j\in [k]$ all $\ell$ nodes in
$V_i$ that are connected to the auxiliary node $z_{i,j}$ as well as $z_{i,j}$ itself are occupied by
agents of type $T_j$. Since node $y$ is connected to $k$ nodes that are occupied by
agents of different types, any agent would get utility $1/k$ by deviating there. No agent occupying
an auxiliary node has an incentive to deviate since she is connected only to agents of her type. 
For every clique, each agent is connected to exactly $\ell$ agents of the same
type ($\ell-1$ of whom occupy nodes of the clique and one that occupies the corresponding auxiliary node)
and $(k-1)\ell$ agents of different type; thus, her utility is $1/k$. Consequently, no agent has
an incentive to deviate, and the social welfare 
is $k\cdot k\ell \cdot \frac{1}{k} + k^2 = k(\ell + k)$.
Hence, the PoA is at least
$\frac{k\ell+k}{\ell+k}$; this expression
becomes arbitrarily close to $k$ as $\ell$ grows.

For the upper bound, consider an arbitrary instance with $n$ agents and $k\geq 2$
types so that there are $n/k$ agents per type. We will show that the social welfare of any
equilibrium assignment is at least $n/k-1$. The bound on the PoA then follows,
since the optimal social welfare is at most $n$.

Recall that we assume that the number of available nodes exceeds the number of agents and the topology is connected,
so there must exist some empty node $v$ with at least one non-empty neighbor. Suppose that
$v$ is connected to $x_i$ agents of type $T_i$, for $i \in [k]$, and let
$s=\sum_{i\in [k]}x_i$.
Consider an agent of type $T_i$.
A deviation to $v$ would give her utility $\frac{x_i}{s}$
if she is not connected to $v$, and utility $\frac{x_i-1}{s - 1}$ otherwise
(for readability we use the convention that $\frac00=0$).
Since at equilibrium no agent has any incentive to deviate,
her utility is at least the utility she would get by deviating to $v$. Therefore,
the social welfare at equilibrium is at least
\begin{align*}
\SW(\vv) &\geq \sum_{i\in [k]} \bigg( \left( \frac{n}{k}-x_i \right)
\frac{x_i}{s} + x_i \frac{x_i-1}{s - 1} \bigg) \\
&\geq \frac{1}{s}{\sum_{i \in [k]} \bigg( \left( \frac{n}{k}-x_i \right) x_i + x_i(x_i-1) \bigg)}
= \frac{n}{k}-1.
\end{align*}
The proof is complete.
\end{proof}

In the setting considered in Theorem~\ref{thm:poa-strategic}
the PoA improves significantly if we require each type 
to have the same number of agents. In the presence of stubborn agents,
to ensure that the price of anarchy does not depend on the number 
of agents $n$, we additionally require 
that this constraint holds both for strategic and for stubborn agents.

\begin{theorem}\label{thm:poa-stubborn}
For $k$-typed Schelling games with $n$ agents the PoA
\begin{itemize}
\item is $\Omega(n)$ for each $k\ge 2$ even if there is an equal number of agents per type;
\item is $k+o(1)$ if each type has the same number
of strategic agents and the same number of stubborn agents.
\end{itemize}
\end{theorem}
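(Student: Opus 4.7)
I would prove the two bullets separately, with the upper and lower bounds for the second bullet handled by independent arguments.

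\emph{First bullet.} For each even $n$, I would construct a $2$-typed instance with $n/2$ agents of each type and PoA $=\Omega(n)$; the extension to $k>2$ is routine, padding with $k-2$ fully-stubborn types attached by pendant edges. All $T_2$ agents are stubborn, all $T_1$ strategic. The topology combines a star (center $c$, leaves $\ell_1,\ldots,\ell_n$), a disjoint clique $K$ on $n/2$ vertices, a bridge edge $k_1\ell_3$, and an extra edge $\ell_1\ell_2$. Stubborn $T_2$ agents occupy $c$ and the leaves $\ell_3,\ldots,\ell_{n/2+1}$. In the bad assignment, the strategic $T_1$ agents occupy $\ell_1,\ell_2$ and $n/2-2$ of the remaining free leaves: the two agents at $\ell_1,\ell_2$ obtain utility $1/2$ each via the extra edge, while all other strategic agents (having only the stubborn center $c$ as neighbor) obtain utility $0$, for total welfare $1$. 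Every empty vertex (a free leaf, the clique vertex $k_1$ whose sole external neighbor is the stubborn $\ell_3$, or any other clique vertex all of whose neighbors are empty) yields utility $0$ to a lone deviator, so this is an equilibrium. In contrast, placing all strategic $T_1$ agents inside $K$ yields utility close to $1$ for each (with at most a slight loss at $k_1$ because of its stubborn $T_2$ neighbor), giving welfare $\Omega(n)$. Hence PoA $=\Omega(n)$.

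\emph{Second bullet, upper bound.} Let $r$ and $s$ be the common strategic and stubborn counts per type, so that $n=k(r+s)$. I would adapt the Lemma~\ref{lem:poa-strategic} argument, replacing $n/k$ by $r$. By connectivity and $|V|>n$, some empty node $v$ has an occupied neighbor. Let $x_i$ be the number of $T_i$ agents adjacent to $v$, $x_i^R$ the number of strategic ones among them, and $s^*=\sum_i x_i\ge1$. For every strategic $T_i$ agent $a$, the equilibrium condition gives $u_a(\vv)\ge x_i/s^*$ if $a\not\sim v$ and $u_a(\vv)\ge(x_i-1)/(s^*-1)$ otherwise (using the convention $0/0=0$). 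Using $(x_i-1)/(s^*-1)\ge(x_i-1)/s^*$ and summing over all $kr$ strategic agents,
\begin{align*}
\SW(\vv) &\ge \frac{1}{s^*}\sum_{i\in[k]}\bigl[(r-x_i^R)\,x_i + x_i^R\,(x_i-1)\bigr] \\
&= \frac{1}{s^*}\Bigl(r\,s^* - \sum_i x_i^R\Bigr) \;\ge\; r-1.
\end{align*}
Since $\SW(\vv^*)\le kr$, the PoA is at most $kr/(r-1)=k+o(1)$ as $r\to\infty$.

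\emph{Second bullet, lower bound, and the main obstacle.} The matching $k-o(1)$ lower bound is inherited from the construction in Lemma~\ref{lem:poa-strategic}: having no stubborn agents and equal strategic counts per type, it trivially satisfies the hypothesis of the second bullet. The most delicate step is the first-bullet construction, where one must juggle three constraints simultaneously: (i) balanced type sizes, (ii) the trapped configuration really is an equilibrium (no empty node strictly improves any deviator, which forces $K$ to have no $T_1$-friendly external neighbor), and (iii) the equilibrium welfare is strictly positive so that the PoA is finite and grows linearly with $n$ (which is why the extra edge $\ell_1\ell_2$ is needed). The careful placement of the bridge endpoint so that $K$'s only outside neighbor is a stubborn $T_2$ vertex is what reconciles (ii) with the $\Omega(n)$ optimum.
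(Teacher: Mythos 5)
Your proposal is correct. On the second bullet you essentially reproduce the paper's argument: the upper bound is the same empty-node deviation computation (lower-bounding each strategic agent's equilibrium utility by what she would get at an empty node $v$ with an occupied neighbour, and summing to get $\SW(\vv)\ge r-1$, hence $\PoA \le kr/(r-1)$), and your observation that the matching $k-o(1)$ lower bound is already witnessed by the stubborn-free construction of Lemma~\ref{lem:poa-strategic} (zero stubborn agents per type counts as ``the same number'') is a legitimate shortcut --- the paper instead builds a second clique-and-hub instance with one stubborn agent per type, which proves the slightly stronger point that the bound is tight even with stubborn agents present, but is not needed for the theorem as stated. The genuine divergence is in the first bullet. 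The paper's construction is much lighter: a bare star on $n+1$ nodes where $T_1$ is fully strategic, $T_2$ has one strategic and $\ell-1$ stubborn members, and the remaining types are fully stubborn; the bad equilibrium parks the lone strategic $T_2$ agent at the centre, where her stubborn friends give her a small positive utility while every $T_1$ agent gets zero. Your star-plus-clique gadget with the bridge $k_1\ell_3$ and the edge $\ell_1\ell_2$ achieves the same effect (constant positive equilibrium welfare against an $\Omega(n)$ optimum), and your equilibrium verification is sound, since every empty node is either a free leaf or a clique vertex with no occupied friendly neighbour. The only thing you leave implicit is where the $k-2$ pendant stubborn types attach for $k>2$; you should pin this down so that they neither create attractive empty nodes nor destroy the utility-$1/2$ pair at $\ell_1,\ell_2$, but this is indeed routine, and the paper is equally terse about its own $k>2$ extensions.
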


Again, we prove each statement separately.

\begin{lemma}
For $k$-typed Schelling games
with no stubborn agents and $n$ strategic agents,
the price of anarchy is $\Omega(n)$ for each $k\ge 2$, 
even if there is an equal number of agents per type.
\end{lemma}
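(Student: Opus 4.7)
The plan is to exhibit, for each $k \ge 2$, a family of $k$-typed Schelling instances on $n$ agents with equal per-type totals whose price of anarchy is $\Omega(n)$. Since Lemma~\ref{lem:poa-strategic} already establishes a $k + o(1)$ upper bound on the PoA whenever every type has exactly $n/k$ agents and no agent is stubborn, an $\Omega(n)$ lower bound under equal type sizes can only come from introducing stubborn agents; this is consistent with the ambient setting of Theorem~\ref{thm:poa-stubborn} under which this lemma appears, and the construction will keep the totals per type equal while allowing the strategic/stubborn split to vary across types.

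For $k = 2$, fix $n = 2m$ and take the topology to be a star on $n + 1$ nodes, with centre $c$ and leaves $\ell_1, \dots, \ell_{2m}$. Partition the red type as $m - 1$ stubborn agents pinned to $\ell_1, \dots, \ell_{m-1}$ together with $1$ strategic red, and make the blue type consist of $m$ strategic agents (and no stubborn). Both types contain exactly $m$ agents, and any assignment leaves exactly one leaf empty.

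Two equilibria drive the bound. In the \emph{bad} equilibrium, the strategic red sits at the centre and the $m$ strategic blues occupy $m$ of the remaining leaves. Each strategic blue's only neighbour is the central red, so each has utility $0$; the centre red has $m - 1$ stubborn friends and $m$ strategic enemies, yielding utility $\tfrac{m-1}{2m-1}$, so the total welfare is $O(1)$. In the \emph{good} equilibrium, a strategic blue is at the centre and the strategic red sits on some leaf; each of the remaining $m - 1$ strategic blues at a leaf has only the blue centre as neighbour and hence utility $1$, producing welfare $\Omega(m) = \Omega(n)$. The unique empty leaf offers exactly the same single-neighbour view (the centre) as every strategic leaf agent already enjoys, so no leaf deviator can improve; a central agent moving to the empty leaf would obtain utility $0$ and so also has no incentive. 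The PoA ratio is therefore $\Omega(n)$.

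For $k \ge 3$, attach to the main star $k - 2$ disjoint \emph{islands}, one per extra type, each an isolated subgraph of $m = n/k$ nodes entirely filled by stubborn agents of the new type, linked to the main star by a single pendant edge so that the topology stays connected. These extra agents contribute nothing to the social welfare, the two equilibria on the main star lift unchanged (since a strategic red or blue moving onto the island path gains no friends), and each type's total remains $n/k$, so the PoA ratio is preserved. The main obstacle is verifying the no-profitable-deviation condition in both assignments, which reduces to a single friend-count inequality at the unique empty leaf of the main star and is immediate given the star's simple neighbourhood structure.
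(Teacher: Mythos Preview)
Your $k=2$ construction is essentially the paper's: a star on $n+1$ nodes, one type with a single strategic agent and the rest stubborn at leaves, the other type entirely strategic, and the bad equilibrium places the lone strategic agent of the first type at the centre. The welfare calculations match.

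For $k\ge 3$, however, your island extension is underspecified in a way that matters. You call each island ``an isolated subgraph of $m$ nodes'' and then attach it by a single pendant edge; but if ``isolated'' means the $m$ nodes carry no internal edges, then $m-1$ of them are disconnected from the rest of the graph, violating the connectivity assumption the PoA section imposes. You later refer to ``the island path'', suggesting you have a path in mind, but you never commit to this, nor do you say where the pendant edge lands (centre? a specific leaf? the empty leaf?). Each choice changes neighbourhoods and needs its own equilibrium check. Also, your parenthetical justification ``since a strategic red or blue moving onto the island path gains no friends'' is beside the point: every island node is occupied by a stubborn agent, so no strategic agent can move there at all.

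The paper sidesteps all of this by keeping a single star on $n+1=k\ell+1$ nodes for every $k$: type $T_1$ has $\ell$ strategic agents, type $T_2$ has one strategic and $\ell-1$ stubborn agents, and each $T_i$ with $i\ge 3$ consists of $\ell$ stubborn agents, all stubborn agents placed at leaves. The bad equilibrium (the $T_2$ strategic agent at the centre) has welfare below $1/k$, while putting a $T_1$ agent at the centre gives welfare exceeding $\ell-1$, so the PoA is at least $k(\ell-1)=n-k$. This treats all $k$ uniformly on one connected graph with no auxiliary gadgets; your island approach can be made to work, but you would need to pin down the island topology and attachment point and redo the deviation checks.
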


\begin{proof}
Pick a positive integer $\ell$ and 
consider an instance with $n=k\ell$ agents such that there are $\ell$ strategic agents of type $T_1$, 
one strategic agent and $\ell-1$ stubborn agents of type $T_2$, 
and $\ell$ stubborn agents of type $T_i$ for each $i=3, \dots, k$.
The topology is a star with $n+1$ nodes, and all stubborn agents occupy leaf nodes.
Then, any assignment where the strategic agent of type $T_2$ 
occupies the center node is an equilibrium with social
welfare $\frac{\ell-1}{k\ell-1} < \frac{1}{k}$, while the social welfare of any assignment
where the center node is occupied by an agent of type $T_1$ is
$\ell-1+\frac{\ell-1}{k\ell-1} > \ell-1$.
Hence, the price of anarchy is at least $k(\ell-1)=n-k$.
\end{proof}

\begin{figure}[t]
\center
\includegraphics[scale=0.45]{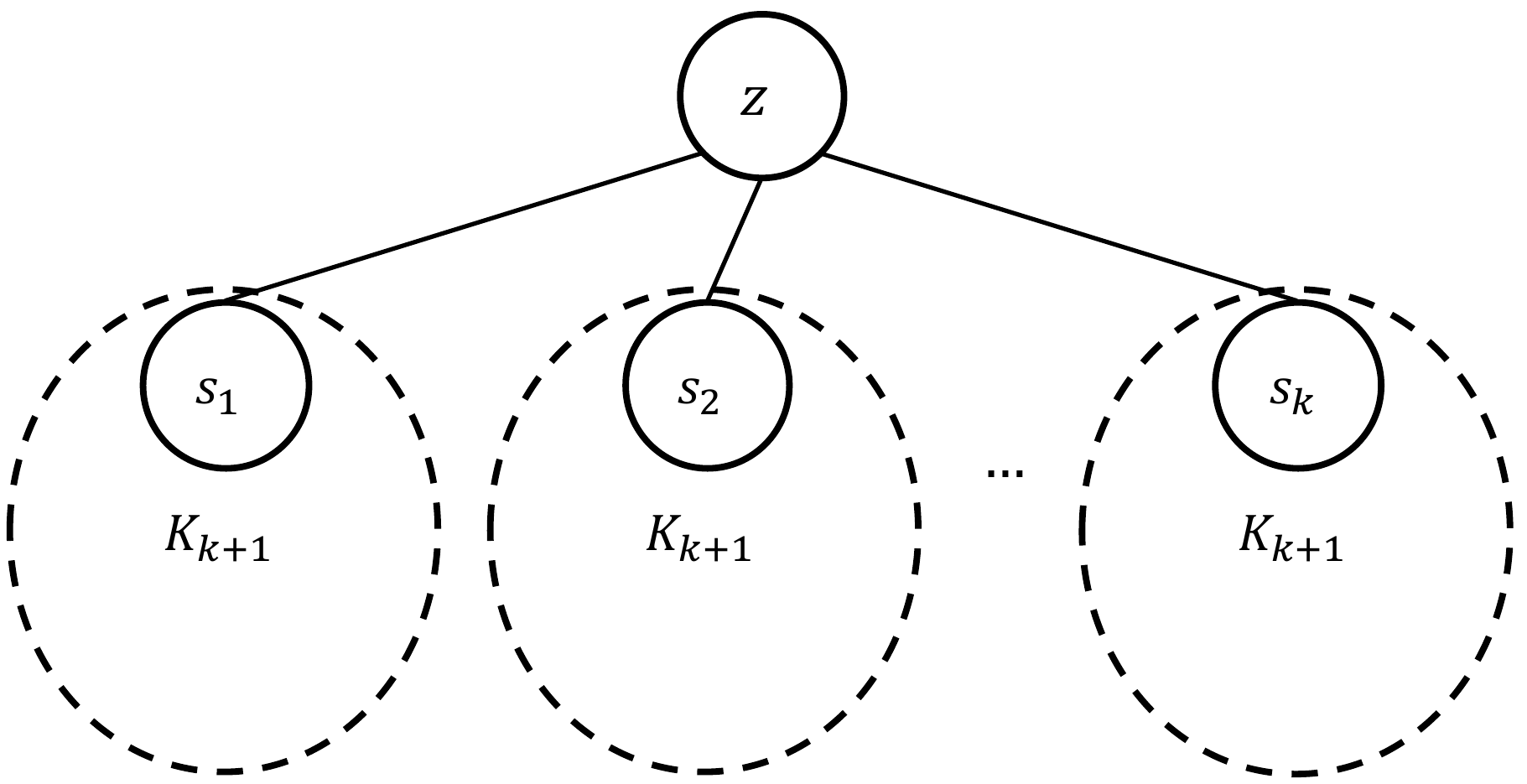}
\caption{The topology used in the proof of Lemma~\ref{lem:poa-k-equal-stubborn}.}
\label{fig:poa-stubborn}
\end{figure}

\begin{lemma}\label{lem:poa-k-equal-stubborn}
For $k$-typed instances,
where all types have the same number of strategic agents
                 and the same number of stubborn agents,
the price of anarchy is $k+o(1)$.
\end{lemma}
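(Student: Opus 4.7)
The plan is to handle the two directions separately, leveraging the machinery already developed in Lemma~\ref{lem:poa-strategic}. For the lower bound, I will simply reuse the instance from the lower bound of Lemma~\ref{lem:poa-strategic}: it has no stubborn agents (so every type trivially has the same, namely zero, stubborn count) and exactly $k(\ell+1)$ strategic agents per type, so it already satisfies the hypotheses of the present lemma, and its PoA of $(k\ell+k)/(\ell+k)$ tends to $k$ as $\ell \to \infty$.

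For the upper bound I will adapt the averaging argument from the upper bound of Lemma~\ref{lem:poa-strategic}, carefully accounting for the fact that stubborn agents contribute to neighborhood counts near an empty node but not to the social welfare. Fix any equilibrium $\vv$. Because $|V| > |R|+|S|$ and $G$ is connected, there is an empty node $v$ adjacent to at least one occupied node. Denote by $x_i$ the number of agents (strategic or stubborn) of type $T_i$ adjacent to $v$, by $x_i^R$ the strategic ones among them, and set $s = \sum_{i\in[k]} x_i \geq 1$. The equilibrium condition gives, for each type $T_i$, a utility lower bound of $x_i/s$ for each of the $|R|/k - x_i^R$ strategic agents of type $T_i$ \emph{not} currently adjacent to $v$, and a lower bound of $(x_i-1)/(s-1)$ for each of the $x_i^R$ strategic agents of type $T_i$ currently adjacent to $v$ (using the convention $0/0=0$).

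Summing these over all strategic agents and applying $(x_i-1)/(s-1) \geq (x_i-1)/s$ will cause the $x_i^R x_i$ cross terms to cancel, leaving
\begin{align*}
\SW(\vv) \geq \frac{1}{s}\sum_{i \in [k]}\bigl[(|R|/k)\, x_i - x_i^R\bigr] = \frac{|R|}{k} - \frac{\sum_i x_i^R}{s} \geq \frac{|R|}{k} - 1,
\end{align*}
since $\sum_i x_i^R \leq s$. As the optimal social welfare is at most $|R|$, the PoA is at most $|R|/(|R|/k - 1) = k + o(1)$ as $|R| \to \infty$. The main obstacle I anticipate is the two-case split on whether the deviating strategic agent is already adjacent to $v$; once the weak inequality $(x_i-1)/(s-1) \geq (x_i-1)/s$ is applied, the algebra telescopes cleanly thanks to the equal strategic count $|R|/k$ per type, and stubborn agents only inflate $s$, which keeps the slack $\sum_i x_i^R/s$ at most $1$.
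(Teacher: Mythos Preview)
Your proposal is correct. The upper-bound argument is essentially identical to the paper's: the paper writes $t$ for the common number $|R|/k$ of strategic agents per type and derives the same chain of inequalities $\SW(\vv)\ge \frac{1}{s}\sum_i(tx_i - x_i^R)\ge t-1$, yielding $\PoA\le kt/(t-1)=k+o(1)$.

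The lower bound is where you diverge. The paper builds a fresh family of instances that genuinely contain stubborn agents (one per type, sitting at the nodes of each clique adjacent to the hub $z$), and then argues separately for odd and even $k$ that a ``mixed'' equilibrium gives each strategic agent utility $1/k$ while the segregated assignment gives utility $1$. You instead observe that the construction from Lemma~\ref{lem:poa-strategic} already has equal (namely zero) stubborn counts across types and equal strategic counts, so it falls under the present hypotheses and delivers $\PoA\ge (k\ell+k)/(\ell+k)\to k$ for free. Your route is shorter and avoids the odd/even case split; the paper's route has the mild expository advantage of exhibiting the bound on an instance where stubborn agents are actually present, but this is not needed for the statement as written.
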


\begin{proof}
We first establish the lower bound. Suppose that $k$ is odd, 
and consider an instance with $k$ types of agents such
that there are $k$ strategic agents and one stubborn agent $s_i$ per type $T_i$.
The number of strategic agents is $s=k^2$.
The topology is depicted in Figure~\ref{fig:poa-stubborn} and consists
of $k$ cliques of size $k+1$, which are connected to each other via an auxiliary node $z$.
The stubborn agent $s_i$ of type $T_i$ occupies the node of the $i$-th clique
that is adjacent to $z$.

In an optimal assignment all strategic agents
of type $T_i$ occupy the nodes of the $i$-th clique: this ensures that the utility of each
strategic agent is $1$ and the social welfare is equal to $s$.
In contrast, consider an assignment
where the auxiliary node $z$ is left empty, and the $i$-th clique includes one agent of
type $T_i$ and $\frac{k-1}{2}$ pairs of agents of different types.\footnote{To see how such an
assignment can be computed, split $k-1$ agents of type $T_i$ into pairs and think
of each pair as one ball of color $T_i$ and each clique as a bin. 
Then, there are $\frac{k-1}{2}$ balls of each color, 
which must be placed in $k$ bins so that each bin contains $\frac{k-1}{2}$ balls of different color. 
To accomplish this, we can order the balls so that balls of type $T_i$ 
appear in positions $i, k+i, 2k+i, \dots $; hence, we can simply put the first $\frac{k-1}{2}$ 
balls in the first bin, the next $\frac{k-1}{2}$ balls in the second bin, etc.} 
This is an equilibrium since all
strategic agents have utility $1/k$, which is exactly the utility they would get by deviating to $z$.
Therefore, the social welfare achieved by this equilibrium assignment is $\frac{s}{k}$, and the price
of anarchy is at least $k$.

When $k$ is even, we can modify the instance as follows. For each $i\in [k]$, 
there are $k-1$ strategic agents and one stubborn agent per type $T_i$. 
The topology consists of $k-1$ cliques of size $k$, 
which are connected to each other via an auxiliary node $z$, 
together with $k$ dummy nodes each connected to a single node occupied by a stubborn player.
For $i \in [k-1]$, the stubborn agent of type $T_i$ 
occupies the node of the $i$-th clique that is adjacent to $z$, 
and the stubborn agent of type $T_k$ occupies one of the dummy nodes.

If all strategic agents of type $T_i$, for $i \in [k-1]$,
occupy the nodes of the $i$-th clique, and the agents of type $T_k$ occupy the dummy nodes, then
the social welfare is equal to $(k-1)^2$. 
On the other hand, there is an equilibrium where agents of type $T_k$ 
occupy dummy nodes and agents of other types are distributed over the cliques as in the equilibrium
for odd $k$. Then, for $i \in [k-1]$, the utility of each strategic agent of type $T_i$ 
is $\frac{1}{k-1}$. The social welfare in this case is $k-1$, and so the price of anarchy is at least $k-1$.

For the upper bound,
consider an arbitrary $k$-typed instance with $t$ strategic and $\ell$
stubborn agents per type, for some integers $t>0$ and $\ell \ge 0$.
We will show that the social welfare of any equilibrium assignment is at
least $t-1$. The bound then follows since
the utility of every strategic agent is at most $1$,
meaning that the optimal social welfare is at most $kt$.

Let $\vv$ be an arbitrary equilibrium assignment.
Since the number of available nodes exceeds the number of agents and the topology is connected,
there must exist some empty node $v$ with at least one non-empty neighbor. Suppose that
$v$ is connected to $x_i$ agents of type $T_i$, for $i \in [k]$,
and $x_i^R$ of them are strategic. Also, let $s = \sum_{i \in [k]}x_i$.
Now, consider a strategic agent of type $T_i$.
A deviation to $v$ would give her utility $\frac{x_i}{s}$
if she is not connected to $v$, and utility $\frac{x_i-1}{s - 1}$ otherwise; again, for readability, we use the convention $\frac00=0$.
Since at equilibrium no strategic agent has any incentive to deviate,
her utility is at least the utility she would get by deviating to $v$. Therefore,
the social welfare at equilibrium is at least
\begin{align*}
\SW(\vv)
&\geq \sum_{i \in [k]} \bigg( \left(t-x_i^R \right)
\frac{x_i}{s} + x_i^R \frac{x_i-1}{s - 1} \bigg) \\
&\geq \frac{1}{s}\sum_{i\in [k]} \bigg(t x_i - x_i^R \bigg)
\geq t-1,
\end{align*}
where the last inequality follows since $x_i^R \leq x_i$. The proof is complete.
\end{proof}

Finally, we show that in Schelling games
even the best equilibrium need not be socially optimal, 
even if all agents are strategic.\footnote{Note that the assumption of a connected topology is no longer necessary for meaningful bounds on the price of stability, since the PoS deals with the best-case equilibrium assignment rather than the worst-case one.}

\begin{theorem}\label{thm:pos}
For $k$-typed Schelling games the PoS
\begin{itemize}
\item can be unbounded for each $k\ge 2$;
\item is at least $3$ for each even $k\geq 2$, if there is the same number of stubborn agents
per type;
\item is at least $34/33$ for each $k\geq 2$, even in the absence of stubborn agents.
\end{itemize}
\end{theorem}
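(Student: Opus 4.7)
The theorem splits into three independent lower bounds on the price of stability, so the plan is to prove each one by exhibiting a tailor-made instance; no single construction works for all three parts. In every case the common strategy is to produce an assignment with high social welfare that is not an equilibrium, because a single strategic agent has a profitable unilateral deviation, and then to argue that every equilibrium has strictly smaller welfare than the optimum.

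For the first claim (unbounded PoS), I would use an ``attractor versus clique'' construction for $k=2$ parametrised by $t$. Take a clique on $t+1$ vertices with a stubborn blue adjacent to every clique vertex, plus $t$ disjoint attractor gadgets, each being an empty vertex adjacent only to its own stubborn red, and place $t$ strategic red agents. The optimum fills the clique and yields welfare $\Theta(t)$, but any strategic red in the clique strictly prefers any free attractor, since her clique-utility $(t-1)/(t+1)$ is below the attractor-utility $1$. The ``all strategic reds on attractors'' assignment is itself an equilibrium with bounded welfare, so the PoS grows with $t$. For $k>2$ I would add dummy stubborn agents of the remaining types in an isolated component. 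The main obstacle is ruling out intermediate equilibria (for example, with one red in the clique and the rest on attractors), which follows by comparing utilities carefully.

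For the second claim (PoS $\geq 3$ for even $k$ with equally many stubborn agents per type), I would design a symmetric topology made of $k$ ``home'' cliques, one per type, joined at a single central vertex whose neighbourhood contains exactly one stubborn agent of each type. The optimum sends each strategic agent into her own home clique and yields maximum welfare. The even parity of $k$ lets me split the stubborn budget so that the optimum is unstable (one agent has a profitable move to the central vertex) and so that every equilibrium is ``balanced'', with each strategic agent obtaining roughly $1/3$ of her optimal utility. The main difficulty is proving that no partially-clustered arrangement is an equilibrium; this is a finite case analysis that relies on the type-symmetry afforded by even $k$.

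For the third claim (PoS $\geq 34/33$ without stubborn agents), the precise ratio strongly suggests a small concrete instance with two types. I would search for an instance in which each of $34$ strategic agents has utility $1$ in the optimum, but some agent has a profitable deviation that collapses exactly one friendly neighbour-incidence globally, so the best equilibrium loses exactly a $1/34$-fraction of the welfare. The construction should be a hand-crafted small graph with a carefully tuned adjacency pattern, rather than a parametric family. The main obstacle is verifying that the candidate low-welfare equilibrium really is the best possible one, which is a bookkeeping check once the topology is fixed.
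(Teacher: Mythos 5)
Your high-level decomposition (three separate instances, in each of which the optimal assignment is unstable and every equilibrium has lower welfare) matches the paper's structure, but each of your three constructions has a concrete problem. For the unbounded bound, your ``attractor'' gadgets defeat the construction: a strategic red agent on an attractor vertex whose only neighbour is a stubborn red has utility exactly $1$ (one friend, no enemies), so the ``all reds on attractors'' equilibrium has social welfare $t$, which is at least the welfare $t\cdot\frac{t-1}{t}=t-1$ of the clique assignment. The equilibrium you exhibit is therefore itself optimal, and the PoS of your instance is $1$. This tension is unavoidable in an attractor-versus-clique design: for the attractor to be preferred over the clique it must offer utility close to $1$, and then the resulting equilibrium is near-optimal. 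The paper instead uses only \emph{two} strategic agents of the same type and three free nodes, arranged so that the unique equilibrium separates them (welfare $\varepsilon$) while the optimum places them adjacent to each other (welfare close to $1$); the entire welfare gap comes from whether the two strategic agents see each other.

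For the factor-$3$ bound your sketch is internally inconsistent: if the optimum places each strategic agent in her own home clique with utility close to $1$, she cannot have a profitable deviation to a central vertex whose neighbourhood contains one stubborn agent of each type, since that vertex offers utility at most $1/k$; so your ``optimum'' would itself be an equilibrium and the PoS would be $1$. You also give no mechanism that produces the constant $3$; in the paper it arises as $\frac{3/2}{1/2}$, namely an optimum of roughly $1+\frac{1}{2}$ against a unique equilibrium of roughly $\frac{1}{2}+0$, again with just two strategic agents. Finally, for $34/33$ you supply no instance at all, and your guiding intuition ($34$ agents each at utility $1$ in the optimum, losing a single friendly incidence) does not match how such a ratio need arise: in the paper it is $\frac{34/4}{33/4}$ on a ten-agent, eleven-node topology, and the bulk of the work is the exhaustive structural argument that every equilibrium of that specific topology has welfare at most $33/4$. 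As written, none of the three parts constitutes a proof.
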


The proof of the above theorem follows by the next three lemmas.

\begin{figure}[t]
\center
\includegraphics[scale=0.45]{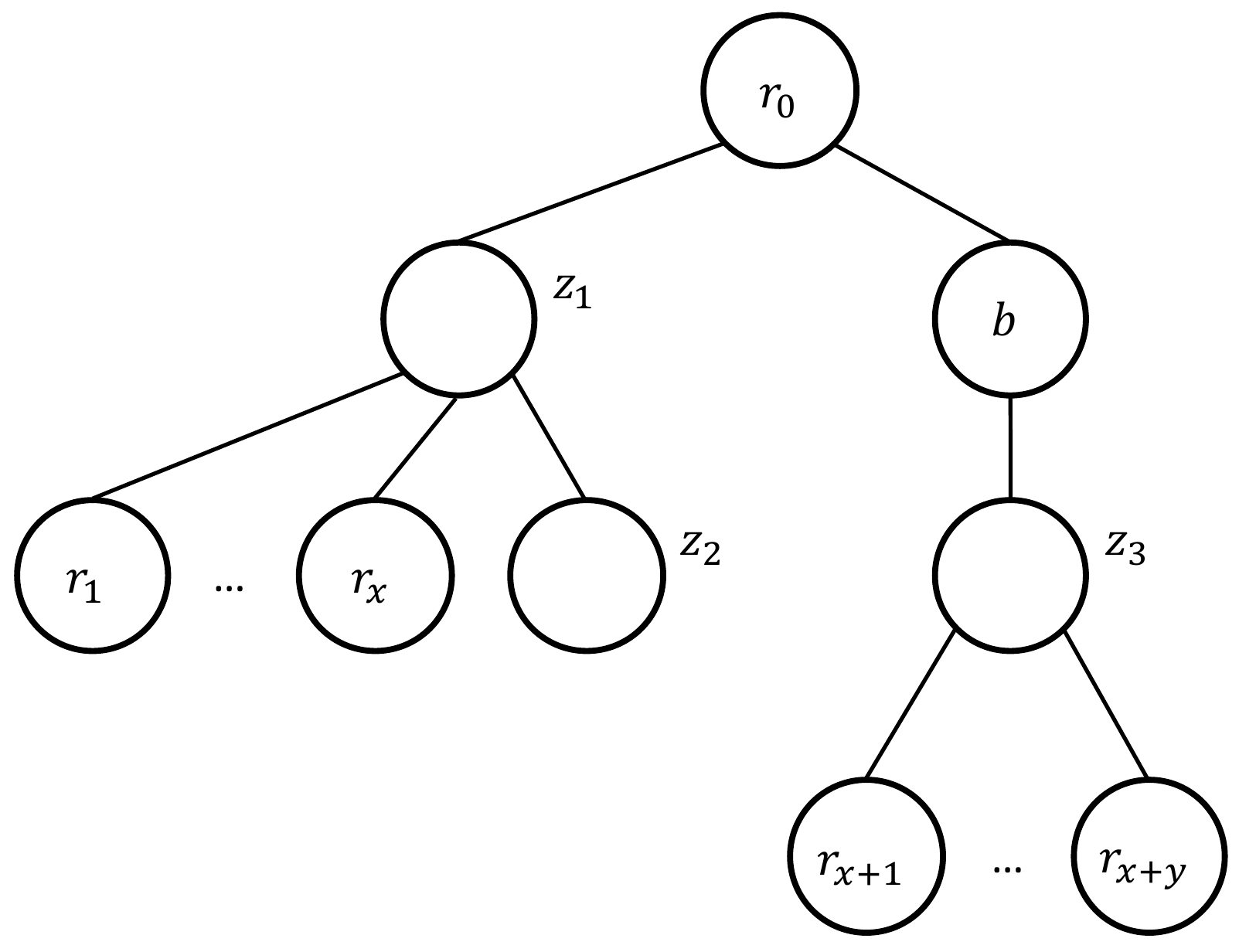}
\caption{The topology used in the proof of Lemma~\ref{lem:pos-stubborn-general} for $k=2$.}
\label{fig:pos-stubborn-general}
\end{figure}

\begin{lemma}\label{lem:pos-stubborn-general}
For $k$-typed Schelling games, the price of stability
can be unbounded for each $k \geq 2$.
\end{lemma}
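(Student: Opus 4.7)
My plan is to construct, for each $k \geq 2$, a parameterized family of instances whose price of stability grows without bound. I will focus on $k = 2$; the extension to larger $k$ is immediate, for instance by adding, for each extra type, a single stubborn agent placed on its own isolated node --- this changes neither the strategic agents' incentives nor the value of the social optimum. The construction consists of a ``core'' cluster on which strategic agents of one type (say red) achieve social welfare bounded away from zero in the intended assignment, together with a single ``trap'' node adjacent only to a stubborn red agent.

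First I would describe the topology precisely (and depict it as Figure~\ref{fig:pos-stubborn-general}) and exhibit the candidate optimum $\vv^*$, in which the strategic red agents occupy the core cluster and the trap is left empty. A direct utility calculation would show that $\SW(\vv^*)$ is at least a positive constant. Next I would argue that $\vv^*$ is not an equilibrium: by placing stubborn blue agents adjacent to the core, I can guarantee that at least one red in the core has utility strictly less than $1$, and hence a strictly profitable deviation to the trap (whose utility is $1$). Following the deviation, I would trace a best-response path to an equilibrium and show that the resulting equilibrium has social welfare at most $\varepsilon$, where $\varepsilon$ can be driven to zero by tuning the parameters of the family.

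The hard part, and where the main difficulty lies, is to rule out the existence of \emph{any} equilibrium with large social welfare. For this I would perform a case analysis over the possible equilibrium configurations, showing that any assignment in which several strategic reds are clustered in the core is unstable --- because some agent inside the cluster still has an incentive either to defect to the trap or to migrate out --- while the only stable configurations leave all but one strategic agent in positions that are isolated or surrounded exclusively by enemies. The crucial role of the stubborn blue agents inside the core is to prevent a ``usurper'' red agent who takes over the node vacated by the first defector from re-establishing the high-welfare configuration, since any such usurper is herself attracted to the trap (or is neutralized by a sufficient number of blue neighbours). Combined with the structural constraints of the topology, this will give $\sup_{\vv \in \EQ(I)} \SW(\vv) \to 0$ along the family while $\SW(\vv^*)$ stays bounded below by a positive constant, and hence $\PoS \to \infty$.
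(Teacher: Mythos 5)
Your proposal is a plan rather than a proof---the topology, the agent counts, and every verification step are left as placeholders---but, more importantly, the quantitative structure of the plan is internally inconsistent, so filling in the details would not work. You want a ``trap'' node adjacent only to a stubborn red agent, so that a strategic red agent occupying it gets utility exactly $1$. But then in any assignment in which the trap is empty, every strategic red agent with utility below $1$ has a profitable deviation to the trap; hence in any equilibrium either the trap is occupied by a strategic red (who contributes utility $1$, forcing $\SW(\vv)\ge 1$) or every strategic red already has utility $1$ (so the assignment is essentially optimal). Either way you cannot have $\sup_{\vv\in\EQ(I)}\SW(\vv)\to 0$ while $\SW(\vv^*(I))$ stays bounded below by a positive constant, which is exactly the asymptotic you claim; with an equilibrium-welfare floor of $1$ and a constant optimum the ratio stays bounded. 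To salvage your architecture you would have to let the number of strategic agents, and hence the optimal welfare, grow without bound, and then prove that \emph{every} equilibrium has welfare $O(1)$---a substantially harder global argument that your sketch only gestures at.

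The paper avoids this problem by inverting the roles: the only strategic agents are two \emph{blue} agents in a sea of stubborn reds (plus one stubborn blue), and there are exactly three free nodes, so only three assignments need to be checked. The node at which the two blues could meet is also adjacent to $\Theta(1/\varepsilon)$ stubborn reds, and the node that would give a blue agent utility $1$ is valuable only when the \emph{other} strategic blue sits next to it. This coordination failure makes the unique equilibrium separate the two blues, with total welfare $\varepsilon$, while the optimum connects them for welfare about $1$, giving $\PoS\ge 1/\varepsilon+1/2$. The essential ingredient you are missing is that the ``good'' location must derive its value from another strategic agent (so that it is worthless in the bad equilibrium), not from a stubborn friend (which makes it unconditionally attractive and puts a floor of $1$ on the welfare of every equilibrium).
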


\begin{proof}
We prove this lemma only for $k=2$; our instance can then be generalized to any number of types by adding isolated nodes in the topology which are occupied by stubborn players of different types. 

Let $\varepsilon > 0$ be a parameter such that $x=\frac{2}{\varepsilon}-2$ and $y=\frac{1}{\varepsilon}-1$ are integer numbers. Consider an instance with $x+y+1$ stubborn red agents, one stubborn blue agent, and two strategic blue agents. The topology and the placement of the stubborn agents is depicted in Figure~\ref{fig:pos-stubborn-general}. There are only three possible assignments depending on which pair of nodes (out of the three available) the two strategic blue agents occupy.

We claim that the only equilibrium assignment is the one where node $z_1$ is left empty with social welfare $\frac{1}{y+1} = \varepsilon$. First, observe that this assignment is indeed an equilibrium since no strategic agent has any incentive to deviate: node $z_1$ can give utility $\frac{1}{x+2}=\frac{\varepsilon}{2}$ to the agent occupying node $z_3$ and utility $0$ to the agent occupying node $z_2$. Since the two agents get utility $\frac{1}{y+1}=\varepsilon$ and $0$, respectively, none of them has any incentive to deviate. To verify the uniqueness of the equilibrium, observe that in the other two possible assignments there exists a strategic agent that can deviate to the empty node in order to increase her utility from $\frac{\varepsilon}{2}$ to $\varepsilon$ in case the strategic blue agents are connected, or from $\varepsilon$ to $1$ in case the strategic blue agents are not connected.

In contrast, the assignment according to which the strategic blue agents are connected to each other (by occupying nodes $z_1$ and $z_2$) is the optimal one with social welfare $1+\frac{\varepsilon}{2}$. Therefore, the price of stability is at least $\frac{1}{\varepsilon}+\frac{1}{2}$, which tends to infinity as $\varepsilon$ tends to zero.
\end{proof}

\begin{figure}[t]
\center
\includegraphics[scale=0.45]{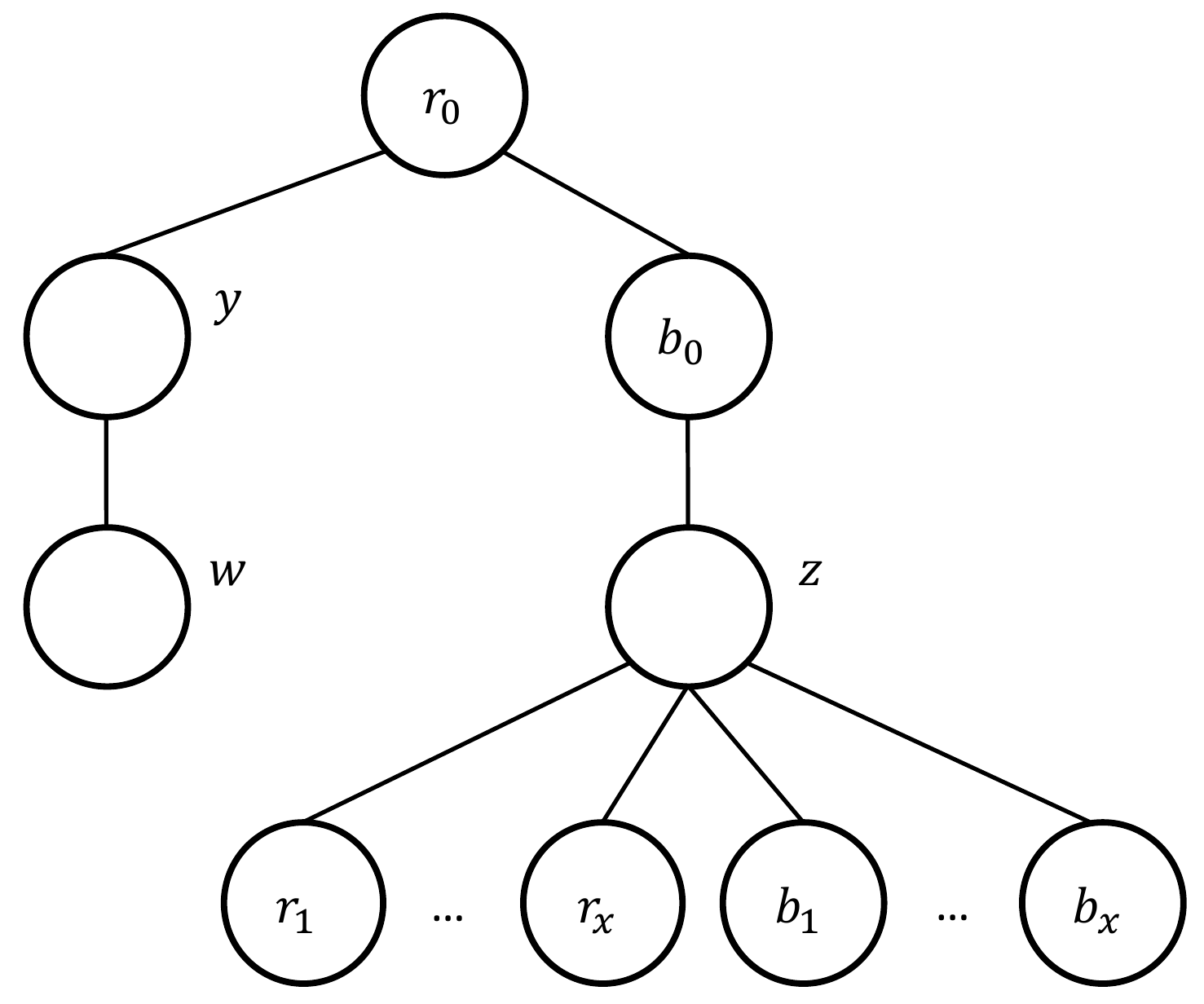}
\caption{The topology used in the proof of Lemma~\ref{lem:pos-stubborn-equal} for $k=2$.}
\label{fig:pos-stubborn-equal}
\end{figure}

\begin{lemma}\label{lem:pos-stubborn-equal}
For $k$-typed Schelling games, the price of stability
is at least $3$, for each even $k\geq 2$, if there is the same number of stubborn agents per type.
\end{lemma}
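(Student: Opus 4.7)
The plan is to exhibit, for each even $k \geq 2$, a specific $k$-typed Schelling game with the same number of stubborn agents per type and show that its optimal social welfare is at least three times that of its best equilibrium. I would handle the base case $k = 2$ first (as suggested by Figure~\ref{fig:pos-stubborn-equal}) and then lift the construction to arbitrary even $k$.

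For $k = 2$, I would design a small topology containing a few strategic red and blue agents together with an equal number of stubborn red and stubborn blue agents, arranged so that the gadget has an ``optimum region'' where the strategic agents cluster among friends, yielding utility close to $1$ each and social welfare roughly equal to the number of strategic agents. The placement of the stubborn agents must ensure that this clustered configuration is \emph{not} an equilibrium: in the optimum, there exists some empty node whose neighborhood is dominated by friends of some strategic agent, giving her a strict incentive to deviate. Conversely, in every stable configuration, each strategic agent is forced into a position adjacent to roughly twice as many enemies as friends, yielding utility about $1/3$ and a total welfare bounded by one third of the optimum. Verifying these claims reduces to a finite case analysis over the possible positions of the strategic agents in the gadget, combined with explicit utility computations at each empty node.

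To extend to even $k \geq 4$, I take $k/2$ disjoint copies of the gadget, where the $i$-th copy uses the pair of types $\{T_{2i-1}, T_{2i}\}$ and lives on its own connected component. Since the gadget is symmetric in its two types, the total number of stubborn agents per type is the same across all $k$ types, while both the optimal welfare and the best-equilibrium welfare scale linearly in $k/2$; hence the ratio $3$ is preserved. (The footnote after Theorem~\ref{thm:poa-stubborn} allows us to use disconnected topologies for PoS bounds.)

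The main obstacle is the gadget design itself: I must simultaneously guarantee that (a) the socially optimal assignment admits a profitable deviation, otherwise the PoS would just be~$1$, and (b) every equilibrium assignment is trapped in a low-welfare configuration with welfare at most one third of the optimum. Achieving both requires a careful count of friends and enemies around each potentially empty node of the gadget, so that profitable deviations exist precisely out of the ``good'' configurations but vanish once the strategic agents sit in the targeted ``bad'' neighborhoods. Once this delicate balance is struck, the PoS bound follows by comparing the welfare of the explicit optimum with the welfare of any surviving equilibrium obtained from the case analysis.
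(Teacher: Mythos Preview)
Your high-level plan---build a gadget for $k=2$ and then replicate it across $k/2$ type-pairs---matches the paper's structure, and your use of disjoint copies (justified by the footnote) is a legitimate variant of the paper's trick of linking the copies through a single empty node. Where your proposal diverges, and where it currently has a gap, is in the $k=2$ gadget itself.

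You describe a gadget with strategic agents of \emph{both} colors, each earning utility close to $1$ at the optimum and utility about $1/3$ in every equilibrium. This remains a wish-list: you have not exhibited any topology with these properties, and forcing every strategic agent---of both colors simultaneously---into a $1/3$ neighborhood in equilibrium is delicate, since a node that looks like ``one friend, two enemies'' to a red agent looks like ``two friends, one enemy'' to a blue agent. The paper sidesteps this entirely by using strategic agents of \emph{one} color only: two strategic blue agents, $x+1$ stubborn agents of each color, and exactly three empty nodes $y,w,z$. A short case analysis over the three possible placements shows that the unique equilibrium leaves $y$ empty and has welfare $\frac{x+1}{2x+1}\to\frac12$, while the optimal (non-equilibrium) placement puts the two blue agents adjacent at $y$ and $w$ for welfare $\frac32$; the ratio tends to $3$ as $x\to\infty$. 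So the actual mechanism is ``$\frac32$ versus $\frac12$'' rather than ``$1$ versus $\frac13$ per agent,'' and the asymmetry (strategic agents of a single type) is what makes the finite case analysis tractable. Your proposal would become correct if you either supply a concrete gadget realizing your $1$-versus-$1/3$ picture, or switch to the paper's simpler one-color construction.
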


\begin{proof}
For simplicity, we will prove the lemma for $k=2$. Let $x \geq 1$ be a parameter and consider an instance with two types of agents (red and blue) such that there are $x+1$ stubborn red agents, $x+1$ stubborn blue agents, and two strategic blue agents. The topology and the placement of the stubborn agents are depicted in Figure~\ref{fig:pos-stubborn-equal}. There are only three possible assignments depending on which pair of nodes (out of the three available) the two strategic blue agents occupy.

We claim that the only equilibrium assignment is the one where node $y$ is left empty with social welfare $\frac{x+1}{2x+1}$. First, observe that this assignment is indeed an equilibrium since no strategic agent has any incentive to deviate: node $y$ can give utility $1/2$ to the agent occupying node $z$ and utility $0$ to the agent occupying node $w$. Since the two agents get utility $\frac{x+1}{2x+1} > 1/2$ and $0$, respectively, none of them has any incentive to deviate. To verify the uniqueness of the equilibrium, observe that in the other two possible assignments there exists a strategic agent that can deviate to the empty node in order to increase her utility from $1/2$ to $\frac{x+1}{2x+1} > 1/2$ in case the strategic blue agents are connected, or from $\frac{x+1}{2x+1}$ to $1$ in case the strategic blue agents are not connected.

In contrast, the assignment according to which the strategic blue agents are connected to each other (by occupying nodes $y$ and $w$) is the optimal one with social welfare $3/2$. Therefore, the price of stability is at least $\frac{3(2x+1)}{2(x+1)}$, which tends to $3$ as $x$ becomes arbitrarily large.

The bound can easily be extended to the case of $k$ types (for even $k\geq 2$) by replicating $k/2$ times the whole instance and connecting the topologies via an empty node.
\end{proof}

\begin{figure}[t]
\center
\includegraphics[scale=0.45]{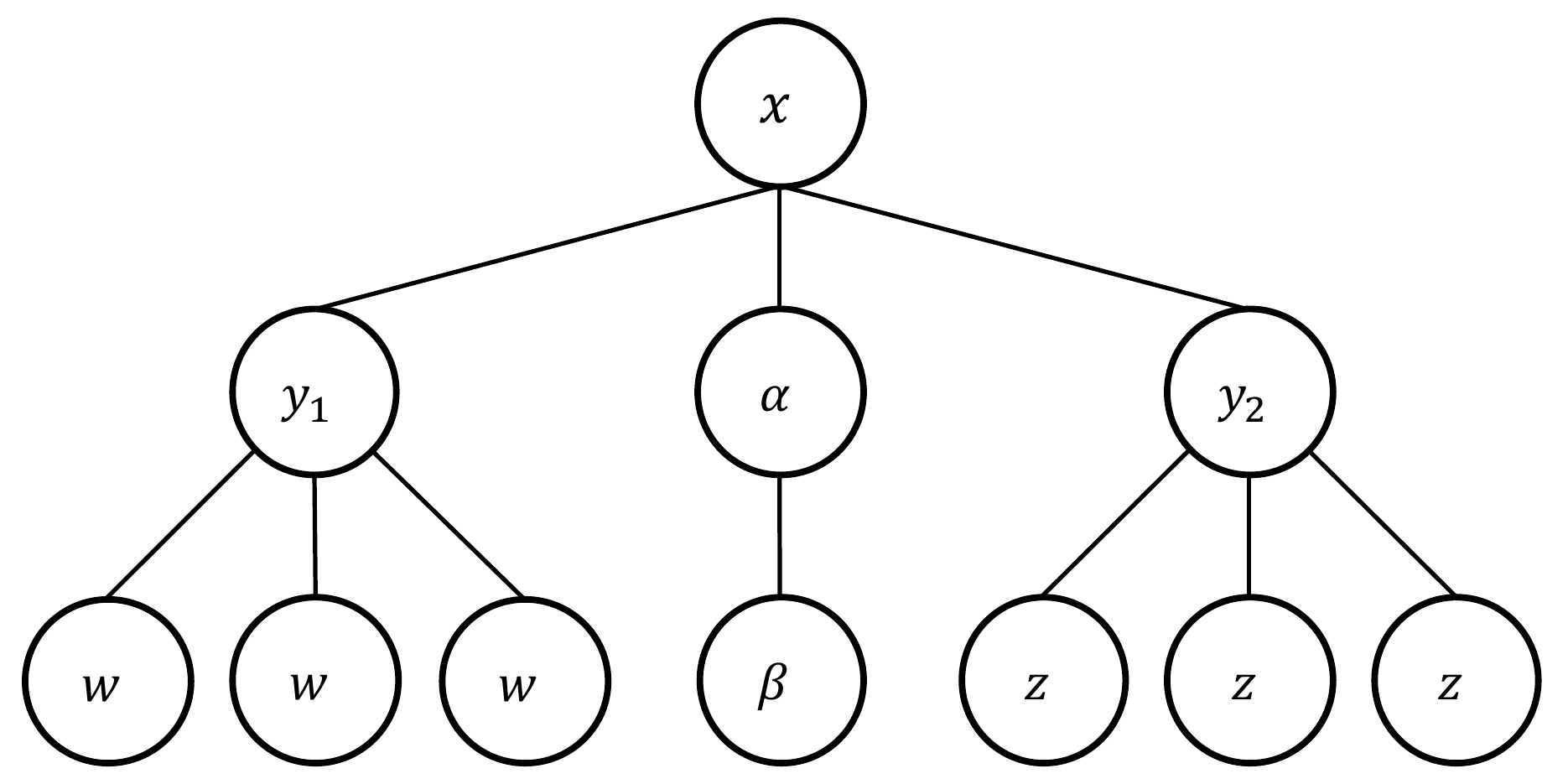}
\caption{The topology used in the proof of Lemma~\ref{lem:pos-strategic} for $k=2$.}
\label{fig:pos-strategic}
\end{figure}

\begin{lemma}\label{lem:pos-strategic}
For $k$-typed Schelling games, the price of stability
is at least $34/33$,
for each $k\geq 2$, even in the absence of stubborn agents.
\end{lemma}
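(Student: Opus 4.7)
The plan is to construct a concrete $2$-typed instance with no stubborn agents, exhibit a specific assignment $\vv^*$, and show that every pure Nash equilibrium achieves social welfare at most $\tfrac{33}{34}\,\SW(\vv^*)$. In the spirit of Lemmas~\ref{lem:pos-stubborn-general} and~\ref{lem:pos-stubborn-equal}, the topology (shown in Figure~\ref{fig:pos-strategic}) will be designed so that the socially optimal placement is forced to leave empty some node at which a placed agent can strictly improve her happiness ratio, thereby ruling $\vv^*$ out as an equilibrium.

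The analysis itself proceeds in two steps. First, I would identify $\vv^*$ on the proposed topology and compute $\SW(\vv^*)$, then exhibit a profitable deviation from $\vv^*$ to certify that the optimum is not stable. Second, I would enumerate the remaining candidate assignments, exploiting the symmetries of the topology to collapse isomorphic cases (for instance, distinct choices of which empty node is retained often yield equivalent assignments up to renaming). For each remaining candidate I would verify the equilibrium condition by comparing every strategic agent's current happiness ratio with what she would obtain at each empty node; among the candidates that pass this check, the one with the largest social welfare is the best equilibrium, and the bound $34/33$ then follows from a direct numerical comparison with $\SW(\vv^*)$.

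To extend the bound from $k=2$ to arbitrary $k\ge 2$, I would augment the topology with $k-2$ additional isolated nodes, each occupied by a single strategic agent of a new type. Such an agent has no friends of her type anywhere, so her utility is $0$ in every assignment; she therefore has no profitable deviation and contributes nothing to the social welfare. Conversely, every strategic agent in the base instance obtains strictly positive utility in every equilibrium of interest, so no base-instance agent is tempted by the utility-$0$ option of moving to an isolated node. Hence the set of equilibria and the value of $\SW(\vv^*)$ decompose cleanly across the components, and the ratio $34/33$ is preserved.

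The main obstacle is the case analysis for $k=2$: the construction must be small enough to handle by hand yet rich enough to force a multiplicative gap of at least $34/33$ between $\SW(\vv^*)$ and the social welfare of every equilibrium. The delicate step is certifying that no near-optimal assignment sneaks through as an equilibrium --- that is, that every assignment whose welfare exceeds $\tfrac{33}{34}\,\SW(\vv^*)$ admits some profitable deviation --- since this is what pins down the precise constant $34/33$ rather than some weaker bound.
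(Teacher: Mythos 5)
There is a genuine gap: your proposal is a strategy outline, not a proof. The entire mathematical content of this lemma is a concrete construction together with its verification, and you supply neither. You do not specify the topology, the numbers of agents of each type, the assignment $\vv^*$, its social welfare, or the welfare of the best equilibrium; the constant $34/33$ can only emerge from these specifics. You acknowledge this yourself when you call the case analysis ``the main obstacle'' --- but that obstacle \emph{is} the proof. For reference, the paper uses a $10$-agent, $2$-type instance on an $11$-node layered tree-like topology, exhibits a non-equilibrium assignment of welfare $34/4$ (note: one only needs a \emph{lower bound} on the optimum, not the exact optimum, contrary to your step of ``identifying $\vv^*$''), and then proves four structural properties that every equilibrium must satisfy (node $x$ occupied; $y_1,y_2$ occupied; both types present in the second layer; the type at $x$ appears at most once more in the second layer). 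These properties cut the candidate equilibria down to essentially two, with welfares $97/12$ and $33/4$, giving the ratio $(34/4)/(33/4)=34/33$. Your alternative of ``enumerating the remaining candidate assignments'' is not viable by hand for an instance of the size needed here --- there are thousands of assignments --- so some such structural argument is unavoidable, and you have not indicated what it would be.

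Your extension to $k>2$ does match the paper's: add $k-2$ isolated nodes and one strategic agent of a fresh singleton type per node, so that these agents have utility $0$ everywhere and never wish to move. One caveat you gloss over: you must also rule out \emph{better} equilibria in which a fresh-type agent occupies a node of the base topology while a base agent sits on an isolated node (base agents cannot be ``tempted'' by occupied isolated nodes, as you say, but they can end up there in an equilibrium). The paper handles this by arguing that no such equilibrium achieves higher welfare than one in which the fresh-type agents occupy the isolated nodes; your assertion that ``the set of equilibria decomposes cleanly across the components'' is not literally true and needs this extra sentence.
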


\begin{proof}
For the sake of simplicity, we prove the lemma for $k=2$, for which the desired lower bound is $34/33$; 
we will discuss how to generalize our construction to $k > 2$ at the end of the proof.

Consider an instance with two types of agents (red and blue) such that there are five red and five 
blue agents; the topology is depicted in Figure~\ref{fig:pos-strategic}.

Let $\vv$ be the following assignment: node $x$, node $y_1$ and all three $w$-type nodes are occupied 
by red agents, while node $y_2$, all $z$-type nodes and node $\beta$ are occupied by blue agents. One 
can easily verify that $\vv$ is an equilibrium since no agent has any incentive to deviate to the 
empty node $\alpha$; the social welfare is $\SW(\vv) = 33/4$.

Let $\vv'$ be the following assignment: node $x$, node $y_1$ and all three $w$-type nodes are occupied 
by red agents, while node $y_2$, two of the $z$-type nodes, node $\alpha$ and node $\beta$ are 
occupied by blue agents. This is not an equilibrium assignment since the blue agent occupying 
$\alpha$ has utility $1/2$ and hence has an incentive to deviate to the empty $z$-type node 
in order to increase her utility to $1$. 
However, it achieves an improved social welfare of $\SW(\vv') = 34/4$.

In order to complete the proof, we need to argue that $\vv$ is an equilibrium
with the maximum social welfare. 
To this end, we establish some properties of equilibrium assignments.
\begin{itemize}
\item 
Node $x$ must be occupied. Assume otherwise that $x$ is left empty. If nodes $y_1$, $\alpha$ and $y_2$ 
are occupied by agents of the same type, then at least one of them will be connected to some agent of 
the other type, and therefore will have an incentive to deviate to $x$ in order to connect only to agents 
of the same type. Hence, without loss of generality (due to symmetry), at $y_1$, $\alpha$ and $y_2$ 
there are two red agents and one blue agent. Trivially, the blue agent cannot be connected to any 
red agents, since otherwise any such red agent would get zero utility and have an incentive to deviate 
to $x$ in order to increase her utility to $2/3$. Since there are four remaining blue agents, at least 
one of them must be connected to one of the two red agents occupying nodes at the second layer. Hence, 
this blue agent gets zero utility and has an incentive to deviate to $x$ in order to increase her utility 
to $1/3$.

\item 
Nodes $y_1$ and $y_2$ must be occupied. Assume otherwise that one of these nodes, say $y_1$, is left 
empty, while node $x$ is occupied by a red agent (without loss of generality). If all $w$-type nodes 
are occupied by agents of the same type, then all these agents get zero utility and have an incentive to 
deviate to $y_1$ in order to increase their utility to at least $2/3$. So, agents of both types must 
appear at the $w$-type nodes. But then, the red such agent has an incentive to deviate to $y_1$ in order 
to increase her utility from zero to at least $1/3$.

\item 
Agents of both types must appear at the nodes of the second layer. Assume otherwise that only agents 
of the same type appear at these nodes, while node $x$ is occupied by a red agent (without loss of 
generality). Let us further assume that $y_1$, $\alpha$ and $y_2$ are all occupied by blue agents. 
Then, since the empty node is one of those at the third layer, two of the blue agents occupying nodes 
$y_1$, $\alpha$ and $y_2$ have an incentive to deviate in order to increase their utility from strictly 
less than $1$ (since they are connected to the red agent occupying node $x$) to $1$. In case $y_1$, 
$\alpha$ and $y_2$ are all occupied by red agents, then all blue agents occupy nodes at the third 
layer, meaning that at least two of the red agents occupying $y_1$, $\alpha$ and $y_2$ get utility 
strictly less than $1$, and only one of them can be connected to the empty node. Hence, the other such 
red agent has an incentive to deviate to the empty node and increase her utility to $1$. Therefore, the 
empty node must be one of those at the second layer.

Since $y_1$ and $y_2$ are occupied, $\alpha$ has to be the empty node. If $\beta$ is occupied by a red 
agent, then this agent gets zero utility and has an incentive to deviate to $\alpha$ in order to connect 
to the red agent occupying $x$. Hence, $\beta$ must be occupied by a blue agent. If $y_1$ and $y_2$ 
are occupied by blue agents, then either one of them is connected only to red agents or both are 
connected to three red agents (including the one at $x$) and one blue agent. In any case, at least 
one of them has an incentive to deviate to $\alpha$ and increase her utility from $0$ or $1/4$ to $1/2$. 
So, nodes $y_1$ and $y_2$ must be occupied by red agents. But then, all blue agents occupy nodes at 
the third layer, get zero utility, and the four of them that are connected to the red agents occupying 
$y_1$ and $y_2$ have an incentive to deviate to $\alpha$ in order to increase their utility to $1/2$.

\item 
The type of agents that appears at node $x$ can appear only once more at nodes $y_1$, $\alpha$ or 
$y_2$. Assume otherwise that $x$ is occupied by a red agent (without loss of generality) and two nodes 
at the second layer, say $y_1$ and $\alpha$, are occupied by red agents as well; the case where $y_1$ and $y_2$ are occupied by red players is similar. By the discussion above, $y_2$ must then be occupied by a blue agent. 
Observe that since there are four 
remaining blue agents, one of them (agent $p$) has to be connected to one of the red agents 
occupying $y_1$ and $\alpha$. Trivially, none of the $z$-type nodes can be empty, since this would give an
incentive to $p$ to deviate there in order to increase her utility from zero to $1$. But then, this 
means that one of the $w$-type nodes or node $\beta$ is empty, thus giving an incentive to the red agent 
occupying $x$ to deviate in order to increase her utility from $2/3$ to $1$.
\end{itemize}

Given the above structural properties, there can only be two equilibria (and two more symmetric ones, 
produced by exchanging agents of different types):
\begin{itemize}
\item 
Nodes $x$ and $y_1$ are occupied by red agents, while nodes $\alpha$ and $y_2$ are occupied by blue 
agents; the assignment for the nodes of the third layer is then trivially defined. Such an equilibrium 
has social welfare $97/12$.
\item 
Nodes $x$ and $y_1$ are occupied by red agents, node $y_2$ is occupied a blue agent, and node 
$\alpha$ is empty; the assignment for the nodes of the third layer is trivially defined so that node 
$\beta$ is occupied by the last blue agent that gets zero utility. Such an equilibrium has social 
welfare $33/4$.
\end{itemize}
Hence, the second type of equilibrium assignment is the one with maximum social welfare, and the lower 
bound on the price of stability follows.

We can generalize the above instance to $k > 2$ agent types as follows. Let $G_0$ be the topology used in the above instance for two agent types. Now, consider an instance where the topology consists of $G_0$ and $k-2$ isolated nodes. There are five agents of type $T_1$, five agents of type $T_2$, and one agent per type $T_i$ for $i \in \{3,\dots,k\}$; the agents of type $T_1$ and $T_2$ correspond to the red and blue agents in the instance for $k=2$. 

Observe that the agents of type $T_i$ for $i \in \{3,\dots,k\}$ get zero utility in any possible assignment, since they are unique of their type. Consequently, even though there are many equilibrium assignments where these agents occupy nodes of $G_0$, none of these equilibria achieve higher utility than the ones where these agents occupy isolated nodes, and agents of types $T_1$ and $T_2$ occupy the nodes of $G_0$. Consequently, following the same reasoning as in the above instance for $k=2$, we can conclude that the price of stability is at least $34/33$.      
\end{proof}


\section{Variants and Extensions}\label{sec:extensions}
Throughout this paper, we focused on a setting where agents are classified into $k$ types
and their utilities are defined by the proportion of their friends among their neighbors.
In this section, we introduce three variants of this model and briefly discuss
some preliminary results; a more thorough 
investigation of these alternative models is left for future work.

\subsection{Schelling games with social networks}
In $k$-typed Schelling games,
the friendship relation is defined by types: an agent's set of friends
consists of all agents of the same type. One can also consider a more general friendship
relation, defined by an arbitrary undirected graph $\cal G$ with vertex set $N$, 
which we will refer to as the {\em social network}: the set
of friends of agent $i$ consists of all neighbors of $i$ in $\cal G$. 
We refer to the resulting class of games as {\em social Schelling games}. 

By definition, $k$-typed Schelling games form a subclass
of social Schelling games: a $k$-typed game corresponds to a social network consisting
of $k$ cliques. Hence, our next theorem implies Theorem~\ref{thm:typed-star-deg2}
in Section~\ref{sec:existence}. 

\begin{theorem}\label{thm:social-star-deg2}
Every social Schelling game
where the topology is a star or a graph of maximum degree $2$
admits at least one equilibrium assignment, which can be computed in polynomial time.
\end{theorem}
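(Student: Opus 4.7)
The plan is to handle the star topology and the maximum-degree-two topology separately. For the star, I would use the fact that every leaf has degree $1$: a strategic agent at a leaf has utility determined solely by who occupies the center, and deviating to another empty leaf leaves that utility unchanged. A strategic agent at the center also cannot strictly benefit from moving to a leaf, because after such a move the center is empty and so the deviating agent's sole topology-neighbor is empty, yielding utility $0$. Consequently, any assignment in which the center is occupied (either by a stubborn or a strategic agent) is an equilibrium, and the algorithm simply places some strategic agent at the center (unless it is already stubbornly occupied) and distributes the remaining strategic agents arbitrarily among empty leaves; if no strategic agent exists, the claim is trivial.

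For the maximum-degree-two case, the topology is a disjoint union of paths and cycles, and every utility lies in $\{0,\,1/2,\,1\}$, which makes the set of possible deviations discrete. My plan is to build an equilibrium by processing each connected component. For a cycle with at least one empty node, I would conceptually cut the cycle at an empty node, reducing it to the path case; a cycle with no empty nodes consists entirely of stubborn agents and requires no action. For a path component I would then place the strategic agents so that friends sit in consecutive blocks and the available empty positions (subject to the fixed stubborn positions) act as separators between groups of mutual non-friends, aiming for as many agents as possible to achieve utility $1$.

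The main obstacle will be verifying global equilibrium: a strategic agent in one component could in principle deviate to an empty node in a different component, or to a far-away empty node in the same component, if that gives strictly higher utility. To rule this out, I would argue that the construction can always be arranged so that every empty node $z$ satisfies the following property: for every strategic agent $i$ whose current utility is strictly below $1$, at least one occupied neighbor of $z$ is an enemy of $i$, so that $u_i(z,\vv_{-i}) \le u_i(\vv)$. If a first attempt at placement fails this test, I would correct it via local adjustments, such as swapping two strategic agents or relocating a separator empty node, and use a lexicographic potential on the sorted multiset of utilities to show the process terminates. Polynomial running time then follows from the discreteness of the utility values together with the polynomial bound on the number of agents and positions.
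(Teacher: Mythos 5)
Your star argument is correct and is essentially the paper's: any assignment with the center occupied is an equilibrium, since leaf agents are indifferent among the leaves and the center agent would get utility $0$ at any leaf. The trouble is the maximum-degree-two case. Your constructive plan of placing ``friends in consecutive blocks'' separated by empty nodes implicitly assumes that friendship is transitive, i.e., that the agents partition into groups of mutual friends. That is exactly the $k$-typed special case; in a social Schelling game the friendship relation is an arbitrary graph, so ``blocks of friends'' and ``groups of mutual non-friends'' are not well-defined, and there is no reason a placement with the blocking property you want (every empty node adjacent to an enemy of every agent whose utility is below $1$) should exist, let alone be found greedily. Note also that the paper shows (Theorem~\ref{thm:social-opt-hardness}) that welfare maximization on a degree-$2$ topology is NP-hard for social Schelling games, so a construction aiming to give ``as many agents as possible'' utility $1$ is dubious as a polynomial-time routine.

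The second gap is termination of your repair loop. The sorted multiset of utilities, compared lexicographically, is not a potential for these games: when an agent jumps away from a node, a former neighbor who was her friend can drop from utility $1$ to $0$, which worsens the sorted vector even though the mover strictly gains; and you give no argument that your ad hoc swaps monotonically improve any quantity. The paper's proof supplies exactly the missing ingredient: an ordinal potential defined on edges, with $\phi(\vv,e)=1$ if both endpoints are occupied by friends, $0$ if by enemies, and $\tfrac13$ if an endpoint is empty, summed over all edges. A short case analysis on the degrees (at most $2$) of the vacated node and the target node shows that every profitable jump strictly increases this sum, so best-response dynamics from any initial assignment converges to an equilibrium in $O(|V|)$ steps, which yields both existence and the polynomial-time algorithm. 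Without this (or an equivalent monotone quantity), your argument establishes neither.
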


\begin{proof}
Consider a social Schelling game with a set of agents $N=R\cup S$, 
where $R$ is the set of strategic agents and $S$ is the set of stubborn agents, 
a topology $G=(V, E)$, a social network $\cal G$, and a function 
$\lambda:S\to V$ that describes the locations of the stubborn agents; for each $i\in R$, let 
$F(i)$ denote the set of nodes of $\cal G$ that are adjacent to $i$.

Suppose that $G$ is a star with center $v$.
Consider an assignment $\vv$ such that $v=v_i$ for some $i\in R\cup S$.
All strategic agents are indifferent among the leaves, so no agent in $R\setminus\{i\}$
has a beneficial deviation. Now, consider agent $i$. If $i$ is stubborn, she cannot deviate;
if $i$ is strategic,
she does not want to deviate, as any leaf node would give her zero utility.
Hence, $\vv$ is an equilibrium.

Now, suppose that $G=(V, E)$ is a graph of maximum degree $2$.
Our analysis for this case is inspired by Theorem~6 in the work of \citet{CLM18}.
For each $v\in V$, let $\deg(v)$ denote the degree of a vertex $v$ in $G$. 
Given an assignment $\vv$,
for each edge $e= \{v, w\}$, we define
$$
\phi(\vv, e)=
\begin{cases}
1      	&\text{if $w=v_i$, $v=v_j$ and $i\in F(j)$}\\
0	&\text{if $w=v_i$, $v=v_j$ and $i\not\in F(j)$}\\
\frac13	&\text{if $v$ or $w$ is unoccupied in $\vv$}.
\end{cases}
$$
Let $\Phi(\vv)=\sum_{e\in E}\phi(\vv, e)$. We claim that
$\Phi(\vv)$ is an ordinal potential function for our setting,
i.e., if an agent deviates to increase her utility, the potential function increases.

To see this, consider an assignment $\vv$ and an agent $i$ with $v_i=v$ that deviates to an empty node $w$;
denote the resulting assignment by $\vv'$.
Given an edge $e\in E$,
let
$$
\Delta(e)=\phi(\vv', e)-\phi(\vv, e).
$$
Also, for $z\in\{v, w\}$, let $\Delta(z)=\sum_{e:z\in e}\Delta(e)$.
Note that $i$'s move only changes the potential of edges incident to $v$
and $w$.
Hence, if $v$ and $w$ are not adjacent, we have
$\Phi(\vv')-\Phi(\vv)=\Delta(v)+\Delta(w)$.
We will now prove that $\Delta(v)+\Delta(w)>0$; if $v$ and $w$
are not adjacent, this establishes our claim; towards the end of the proof
we will explain how to handle the case $\{v, w\}\in E$.
We make the following observations.

\begin{itemize}
\item As no agent benefits from moving to an isolated node, it must be $\deg(w)>0$.

\item If $\deg(w)=1$, let $e_w\in E$ be the edge that is adjacent to $w$.
Since $w$ is empty in $\vv$, we have that $\phi(\vv, e_w)=\frac13$.
Since agent $i$ benefits from moving to $w$, we have that $\phi(\vv', e_w)=1$. 
Hence, $\Delta(e_w)=\frac23$ and, consequently, $\Delta(w)=\frac23$.

\item If $\deg(w)=2$, let $e_{w, 1}$ and $e_{w, 2}$ be the two edges incident to $w$.
Since $w$ is empty in $\vv$, we have that $\phi(\vv, e_{w, 1})=\phi(\vv, e_{w, 2})=\frac13$. 
Since agent $i$ benefits from moving to $w$, we have that $\phi(\vv', e_{w, 1})+\phi(\vv', e_{w, 2})\ge 1$. 
Hence, $\Delta(w)\ge \frac13$.

\item If $\deg(v)=0$ then by definition $\Delta(v)=0$.

\item If $\deg(v)=1$, let $e_v\in E$ be the edge that is incident to $v$.
Since $i$ benefits from moving away from $v$, we have that $\phi(\vv, e_v)\le \frac13$.
Since $v$ is left empty in $\vv'$, we have that $\phi(\vv', e_v)=\frac13$ and, consequently, $\Delta(v)\ge 0$.

\item If $\deg(v)=2$, let $e_{v, 1}$ and $e_{v, 2}$ be the two edges incident to $v$.
Since $v$ is left empty in $\vv'$, we have that $\phi(\vv', e_{v, 1})=\phi(\vv', e_{v, 2})=\frac13$. 
Since agent $i$ benefits from moving away from $v$, we have that $\phi(\vv, e_{v, 1})+\phi(\vv, e_{v, 2})\le 1$. 
Thus, $\Delta(v)\ge -\frac13$.
\end{itemize}
By the above observations, it follows that $\Delta(v)+\Delta(w)>0$ unless $\Delta(v)=-\frac13$ and $\Delta(w)=\frac13$.
However, this is impossible: $\Delta(v)=-\frac13$ only if in $\vv$ agent $i$
is adjacent to one friend and one enemy, and $\Delta(w)=\frac13$ only if in $\vv'$
agent $i$ is adjacent to one friend and one enemy; but in such a case, agent
$i$ would have no incentive to move, a contradiction.
This completes the analysis for when $\{v,w\} \notin E$.

Now, suppose that $v$ and $w$ are adjacent.
In this case we have that $\Phi(\vv')-\Phi(\vv) = \Delta(v)+\Delta(w) - \Delta(\{v, w\})$.
However, since $w$ is empty in $\vv$ and $v$ is empty in $\vv'$, 
it must be $\Delta(\{v, w\})=\frac13-\frac13=0$, and hence $\Delta(v)+\Delta(w)>0$
implies $\Phi(\vv')-\Phi(\vv)>0$ in this case as well.

Finally, note that the potential function takes values in the set
$\{\frac{\ell}{3}\mid \ell=0, \dots, 3|V|\}$, where $|V|$
is the number of nodes of the topology graph. Therefore, any best response dynamics
starting from an arbitrary initial configuration
converges to an equilibrium in $O(|V|)$ steps.
\end{proof}

Conversely, all our non-existence results (Theorem~\ref{thm:non-existence}), hardness
results (Theorems~\ref{thm:eq-hardness} and~\ref{thm:opt-hardness}) and lower bounds on the PoA
and PoS (Section~\ref{sec:poa-pos}) apply to social Schelling games
as well. In fact, maximizing the social welfare in social
Schelling games is NP-hard even if all agents are strategic (whereas our hardness reduction 
for $k$-typed games uses stubborn agents). Moreover, this hardness result
holds even if $G$ is a graph of maximum degree $2$, i.e., social welfare maximization may be hard
even when finding equilibria is easy.

\begin{theorem}\label{thm:social-opt-hardness}
Given a social Schelling game $I$ and a rational value $s$,
it is {\em NP}-complete to decide whether $I$ admits an assignment
with social welfare at least $s$.
The hardness result holds even if all agents are strategic
and even if $G$ is a graph of maximum degree $2$. 
\end{theorem}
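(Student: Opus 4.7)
The plan is to prove NP-hardness by a reduction from \textsc{Hamiltonian Cycle}, which is NP-complete on undirected graphs. Given an input graph $H=(V,E)$ with $n=|V|$, I construct a social Schelling game $I$ as follows. The agent set is $N=V$ with all agents strategic (so $S=\emptyset$); the social network $\mathcal{G}$ is exactly $H$; the topology $G$ consists of a cycle $C_n$ on $n$ fresh vertices together with one additional isolated vertex $v^*$, so $|V(G)|=n+1>|N|$ and $G$ has maximum degree $2$. The threshold is $s=n$. Membership in NP is immediate, since the social welfare of any given assignment can be computed in polynomial time.

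I claim that $I$ admits an assignment with social welfare at least $n$ if and only if $H$ has a Hamiltonian cycle. For the forward direction, if $u_1u_2\cdots u_n u_1$ is a Hamiltonian cycle of $H$, I place agent $u_i$ at the $i$-th vertex of $C_n$ in cyclic order and leave $v^*$ empty; each agent then has two neighbors in the topology, both friends in $\mathcal{G}$, so every agent has utility $1$ and $\SW=n$. Conversely, suppose some assignment $\vv$ achieves $\SW(\vv)\geq n$. Since each of the $n$ strategic agents has utility at most $1$, every agent must have utility exactly $1$. An agent placed at $v^*$ has no neighbors and by the utility definition receives $0$; hence $v^*$ must be empty under $\vv$ and all $n$ agents occupy $C_n$. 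Each agent then has exactly two topology-neighbors, and utility $1$ forces both of them to be friends in $\mathcal{G}$, so the cyclic order of the agents around $C_n$ is a Hamiltonian cycle of $H$.

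The main design choice to get right is the topology: adding the extra vertex as an isolated node, rather than as an extra vertex on the cycle, is what guarantees that the unique way to attain the welfare threshold $n$ is to leave $v^*$ empty and arrange the agents around $C_n$ in a cyclic pattern corresponding to a Hamiltonian cycle of $H$. Had I instead taken a cycle of $n+1$ vertices, removing one vertex would yield a path of $n$ agents and the reduction would go through from \textsc{Hamiltonian Path} rather than \textsc{Hamiltonian Cycle}, which is also NP-complete but would require slightly more careful welfare bookkeeping because the path endpoints contribute only one topology-neighbor each. Beyond this design choice, the argument reduces to a simple per-agent utility cap, and I do not foresee any further obstacle.
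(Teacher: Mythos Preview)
Your proposal is correct and is essentially identical to the paper's own proof: the paper also reduces from \textsc{Hamiltonian Cycle}, takes the agents to be the vertices of $H$ with the social network equal to $H$, sets the topology to be a cycle on $|X|$ nodes together with one isolated node, and uses the threshold $|X|$. Your additional remark about why an isolated extra node is preferable to lengthening the cycle is a nice piece of commentary, but the core construction and argument match the paper exactly.
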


\begin{proof}
It is immediate that our problem is in NP.
To show NP-hardness,
we will use a reduction from the {\sc Hamiltonian Cycle} (HC) problem.
An instance of HC is an undirected graph $H=(X, Y)$;
it is a yes-instance if and only if the vertices of this graph
can be ordered as $x_1, \dots, x_{|X|}$ so that $\{x_{|X|}, x_1\}\in E$
and for each $i \in [|X|-1]$ it holds that $\{x_i, x_{i+1}\}\in E$.

Given an instance $H=(X,Y)$ of HC, where $X$ is the set of nodes and $Y$ is the set of edges,
we construct an instance of our social welfare maximization problem as follows:
\begin{itemize}
\item For every node $v \in X$, we have a strategic agent $p_v$ with set of friends
$F(p_v)=\{p_z: \{z,v\} \in Y\}$.
\item The topology $G=(V,E)$ is a cycle consisting of $|X|$ nodes together with an isolated node $w$.
\end{itemize}
By construction, a social welfare of $|X|$ can be achieved if and only if
the agents can be assigned to the nodes of the cycle so that each of them
is adjacent to two friends; this is possible if and only if $H$ admits a Hamiltonian cycle.
\end{proof}

Identifying special classes of social Schelling games that allow for good upper bounds 
on the price of anarchy and the price of stability is an interesting research direction. 
We note that the upper bounds in Section~\ref{sec:poa-pos} only apply to $k$-typed instances 
with further restrictions on the structure of each type, so they cannot be extended 
to the social setting.

\subsection{Schelling games with enemy aversion}
In our model, if an agent is not adjacent to any friends, it does not matter how many enemies
she is adjacent to. This is also the case in fractional hedonic games:
agents are indifferent between being alone and being in coalitions consisting 
of their enemies. This assumption makes sense when the ``enemies'' of an agent are simply agents that do not 
contribute to her welfare. However, an agent may prefer being alone
to being in a group full of enemies. In the context of hedonic games, such preferences are modeled by 
{\em modified fractional hedonic games} \citep{O12,EFF16,BEI19}, where the utility of an agent in a coalition with 
$f$ friends and $e$ enemies is $\frac{f+1}{f+e+1}$, i.e., the agent herself is included
in the set of her friends. 

Many of our results extend to this definition of utility. 
For example, we can construct instances without equilibria even for $2$-typed games, using ideas similar to those in the reduction of Theorem~\ref{thm:eq-hardness}. 
Further, for $k$-typed games with a tree topology
and a constant number of types, equilibrium existence can be decided in polynomial time,  
by adapting the proof of Theorem~\ref{thm:existence-tree}.
However, it remains an open question if instances with no stubborn
agents always admit an equilibrium in this model.

\subsection{Schelling games with linear utilities}
Throughout the paper we assume that an agent's utility is determined by the fraction of her friends
among her neighbors. Alternatively, an agent may simply care about the number of friends
in her neighborhood 
or the difference between the number of friends $f_i$ and the number of enemies $e_i$;
more broadly, her utility may be an arbitrary linear function of $f_i$ and $e_i$
(in the context of hedonic games, this model corresponds to a subclass
of {\em additively separable hedonic games}; e.g., see \citep{AS16}). 
It turns out that games of this form are 
potential games and therefore have at least one equilibrium; furthermore, 
in the absence of stubborn agents there is always an equilibrium that is socially optimal.

\begin{theorem}\label{thm:additive-utilities}
Consider a variant of the (social) Schelling model where the utility of each agent $i$, 
who is adjacent to $f_i$ friends and $e_i$ enemies,
is $\alpha f_i-\beta e_i$ for some $\alpha, \beta\ge 0$. Then,   
every instance has an equilibrium assignment which 
can be computed in polynomial time.
Moreover, if no agent is stubborn, the price of stability is $1$.
\end{theorem}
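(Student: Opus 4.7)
The plan is to exhibit a simple exact potential function and use it for both existence/efficient computation and for the price-of-stability bound. Define, for any assignment $\vv$, the quantity
\[
\Phi(\vv)=\alpha\cdot F(\vv)-\beta\cdot E(\vv),
\]
where $F(\vv)$ is the number of edges $\{u,w\}\in E(G)$ whose two endpoints are both occupied by agents that are friends, and $E(\vv)$ is the number of edges both of whose endpoints are occupied by agents that are enemies. Note that stubborn agents also contribute to $F$ and $E$; the definition does not distinguish strategic from stubborn agents.

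The first key step is to show that $\Phi$ is an \emph{exact} potential for the deviations of strategic agents. Suppose a strategic agent $i$ moves from her current node $v$ to an unoccupied node $w$, yielding a new assignment $\vv'$. Every friend/enemy edge affected by this deviation must be incident to $v$ or to $w$; edges not incident to either are unchanged in status, and no edge can be incident to both $v$ and $w$ in the sense of contributing differently to $\Phi$ because after the move one of $v,w$ is always empty. Counting carefully, the change in $F(\vv)$ equals $f_i(\vv')-f_i(\vv)$ and the change in $E(\vv)$ equals $e_i(\vv')-e_i(\vv)$, so
\[
\Phi(\vv')-\Phi(\vv)=\alpha\bigl(f_i(\vv')-f_i(\vv)\bigr)-\beta\bigl(e_i(\vv')-e_i(\vv)\bigr)=u_i(\vv')-u_i(\vv).
\]
Hence any strict improving deviation strictly increases $\Phi$, so any assignment maximizing $\Phi$ is an equilibrium.

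For the polynomial-time computation, I would run best-response dynamics from an arbitrary initial assignment. The values of $\Phi$ lie in the set $\{\alpha a-\beta b : 0\le a,b\le |E(G)|\}$, which contains at most $(|E(G)|+1)^2$ distinct reals, so the dynamics terminates after polynomially many improving moves; each move can be identified in polynomial time by scanning over agents and unoccupied nodes and computing utilities. This yields an equilibrium in polynomial time.

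Finally, when $S=\emptyset$, summing utilities over all agents double-counts each friend edge (contributing $2\alpha$) and each enemy edge (contributing $-2\beta$), so $\SW(\vv)=2\Phi(\vv)$. Consequently a global maximizer of $\Phi$ is simultaneously socially optimal and, as noted above, an equilibrium, giving $\PoS=1$. The only subtle point in the whole argument is checking that the effect of a deviation on $\Phi$ equals the effect on $u_i$ rather than twice that effect, which is exactly the edge-versus-endpoint bookkeeping carried out in the exact potential step; everything else is routine.
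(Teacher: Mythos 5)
Your proof is correct and follows essentially the same route as the paper: the paper uses the potential $\Phi(\vv)=\sum_{i\in N}\bigl(\alpha f_i(\vv)-\beta e_i(\vv)\bigr)$, which is exactly twice your edge-based potential, and likewise derives existence, polynomial-time convergence of best-response dynamics via the polynomially bounded value set, and $\PoS=1$ from the coincidence of the potential with (half of) the social welfare when all agents are strategic. The only cosmetic difference is that your version is an exact potential while the paper's is the same function scaled by $2$.
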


\begin{proof}
Consider a game with a set of strategic agents $R$, 
a set of stubborn agents $S$, a topology $G=(V, E)$ and 
a friendship relation that is defined by a social network $\cal G$.
Fix non-negative constants $\alpha$ and $\beta$ such that the utility of an agent who is adjacent
to $f$ friends and $e$ enemies in the topology is given by $\alpha f- \beta e$. Our analysis is inspired by Proposition 2 in the work of \citet{BJ02}, showing that a Nash stable partition always exists in symmetric additively separable hedonic games. 

Let $N=R\cup S$ and $\vv$ be an assignment. For each $i\in N$ let
$\phi_i(\vv)=\alpha f_i(\vv)-\beta e_i(\vv)$, and $\Phi(\vv)=\sum_{i\in N}\phi_i(\vv)$.
We will argue that $\Phi$ is an ordinal potential function for our game.
Note that if all agents are strategic, $\Phi(\vv)$ is equal to the social welfare of $\vv$.
However, in general this is not the case: intuitively, $\Phi$ ascribes ``strategic'' utilities
to the stubborn agents.

Consider an assignment $\vv$ and an agent $i$ with $v_i=v$. Suppose that $i$
has a beneficial deviation from $v$ to another node $w\in V$, which is empty in $\vv$;
denote the resulting assignment by $\vv'$.
Suppose that agent $i$ has $f$ friends and $e$ enemies at $\vv$, and $f'$ friends and $e'$ enemies
at $\vv'$. Then, since the deviation is profitable, it holds that $\phi_i(\vv')-\phi_i(\vv) = \alpha(f'-f)-\beta(e'-e)>0$. We claim that
$\Phi(\vv')>\Phi(\vv)$.

Indeed, consider an agent $j\in N\setminus\{i\}$. 
If $j$ is a neighbor of $i$ in both $\vv$ and $\vv'$, or if 
$j$ is not a neighbor of $i$ in both $\vv$ and $\vv'$, 
then $\phi_j(\vv)=\phi_j(\vv')$.

Now, suppose that $j$ is adjacent to $i$ in $\vv$, but not in $\vv'$.
If $j$ is a friend of $i$, then $\phi_j(\vv')=\phi_j(\vv)-\alpha$, and
if $j$ is an enemy of $i$, then $\phi_j(\vv')=\phi_j(\vv)+\beta$.
Similarly, if $j$ is adjacent to $i$ in $\vv'$, but not in $\vv$, then
if $j$ is a friend of $i$, then $\phi_j(\vv')=\phi_j(\vv)+\alpha$, and
if $j$ is an enemy of $i$, then $\phi_j(\vv')=\phi_j(\vv)-\beta$.
Thus, the overall change in potential can be computed
as
\begin{align*}
\Phi(\vv')-\Phi(\vv) &= \phi_i(\vv') - \phi_i(\vv) 
-\alpha f + \beta e + \alpha f'- \beta e' \\
&= 2\bigg(\alpha(f'-f) - \beta(e'-e)\bigg)>0.
\end{align*}

It follows that, if the strategic agents follow the best response dynamics
starting from any initial configuration, they will converge to an equilibrium.
Moreover, the assignment that maximizes $\Phi$ is an equilibrium, 
so if all agents are strategic, this equilibrium maximizes the social welfare.
Note also that the function $\Phi$ takes values 
in the set $\{\alpha i - \beta j\mid 0\le i, j\le n^2\}$, where $n$
is the number of agents. Thus, any best response dynamics converges in $O(n^4)$
iterations.
\end{proof}

\section{Conclusions}\label{sec:conclusion}
In this paper, we investigated Schelling games on graphs, 
both from the perspective of equilibrium analysis and from the perspective
of social welfare. Concerning equilibrium existence, our positive
results are rather limited in scope: while an equilibrium always
exists for very simple topologies, such as stars and paths, 
it may fail to exist even if the topology does not contain cycles.
It would be interesting to obtain a complete characterization
of topologies that guarantee existence of equilibria.

For welfare maximization, a natural question is whether 
one can efficiently compute assignments with nearly optimal social welfare.
We note that our NP-hardness reductions are not approximation preserving, 
so they do not rule out this possibility.
Another interesting algorithmic question is whether the problem of computing equilibria
in $k$-typed games remains hard in the absence of stubborn agents; we conjecture
that this is indeed the case, but were unable to prove it.

\bibliographystyle{named}
\bibliography{schelling-bib}

\newpage
\appendix

\section{Proof of Theorem~\ref{thm:existence-tree}}
\newcommand{\ulb}{\check{u}}
\newcommand{\uub}{\hat{u}}
\newcommand{\Ulb}{\check{\mathbf{u}}}
\newcommand{\Uub}{\hat{\mathbf{u}}}
\newcommand{\nbr}{\mathbf{\mathbf{n}}}
\newcommand{\kbr}{\mathbf{\mathbf{k}}}
\newcommand{\true}{\text{\em true}}
\newcommand{\false}{\text{\em false}}
\newcommand{\tree}{\text{\em tree}}
\newcommand{\child}{\text{\em child}}
\newcommand{\blue}{\text{\em blue}}
\newcommand{\red}{\text{\em red}}
\newcommand{\ept}{\text{\em empty}}
\renewcommand{\top}{\text{\em top}}

For readability, we will present a polynomial-time algorithm that can decide whether an
equilibrium exists for instances with two agent types (red and blue) and no stubborn agents;
towards the end of the proof, we will explain how to extend it to instances
with a constant number of agent types that may contain stubborn agents.
Let $T_R$ denote the set of all red agents and
let $T_B$ denote the set of all blue agents.
Throughout the proof, we use the convention that a fraction of the form
$\frac{a}{b}$ evaluates to $0$ whenever $a=0$.

Consider an instance $I$ with $n$ agents and tree topology $G=(V, E)$.
Pick an arbitrary node $r$ to be the root of $G$.
Let $\tree(v)$ denote the set of descendants of $v$ (including $v$),
and let $\child(v)$ be the set of children of $v$.
Observe that the utility of a strategic agent takes values
in the set $\mathcal{U}=\{i/j: i\in[n], j\in [n], i\le j\}\cup\{0\}$;
note that $|\mathcal{U}|\le n^2$.

We use the following dynamic programming approach. For each
node $v \in V$, we fill out a table $\tau_v$, which contains an entry
$\tau_v(C, \nbr, \kbr, \Ulb, \Uub)$ for each tuple
$(C, \nbr, \kbr, \Ulb, \Uub)$, where
\begin{itemize}
\item $C\in\{\blue, \red, \ept \}$,
\item $\nbr=(n_B,n_R)\in[n]^2$,
\item $\kbr = (k_B, k_R)\in[n]^2$,
\item $\Ulb = (\ulb_B, \ulb_R, \ulb_{B^\dag}, \ulb_{R^\dag}) \in \mathcal{U}^4$, and
\item $\Uub = (\uub_B, \uub_R, \uub_{\top}) \in \mathcal{U}^3$.
\end{itemize}
Thus, the number of entries in each table is
$3\cdot n^4 \cdot |\mathcal{U}|^7$,
which is polynomial in the input size.

The value of each entry is either \emph{true} of \emph{false}.
Specifically,
$\tau_v(C, \nbr, \kbr, \Ulb, \Uub) = \true$
if and only if there exists an assignment
of a subset of agents to the nodes in $\tree(v)$
that satisfies the following conditions:
\begin{enumerate}
\item If $C=\ept$, then node $v$ is empty and otherwise it is assigned
      to an agent of color $C$. \label{enu:cdt_v}
\item Exactly $n_B$ nodes of $\tree(v)$ are assigned to blue agents, and
      exactly $n_R$ nodes of $\tree(v)$ are assigned to red agents. \label{enu:cdt_nbr}
\item Exactly $k_B$ nodes of $\child(v)$ are assigned to blue agents, and
      exactly $k_R$ nodes of $\child(v)$ are assigned to red agents.

\item Every blue agent in a node of $\child(v)$
      gets utility at least $\ulb_{B^\dag}$ and
      every red agent in a node of $\child(v)$
      gets utility at least $\ulb_{R^\dag}$. \label{enu:cdt_ulb_ch}
\item Every blue agent in a node of $\tree(v)\setminus ( \child(v)\cup \{v\})$
      gets utility at least $\ulb_B$ and
      every red agent in a node of $\tree(v)\setminus ( \child(v)\cup \{v\})$
      gets utility at least $\ulb_R$. \label{enu:cdt_ulb}

\item If a blue agent that is \emph{not} already in $\tree(v)$ moves
      to an empty node of $\tree(v)\setminus \{v\}$,
      her utility would be at most $\uub_B$, and
      if a red agent that is \emph{not} already in $\tree(v)$
      moves to an empty node of $\tree(v)\setminus \{v\}$,
      her utility would be at most $\uub_R$. \label{enu:cdt_uub}
\item If node $v$ is not empty, then the agent occupying $v$ can get utility at most $\uub_{\top}$
      by moving to an empty node of $\tree(v)\setminus \{v\}$. \label{enu:cdt_vub}

\item All agents in nodes of $\tree(v)\setminus \{v\}$ do not have
      an incentive to deviate to empty nodes of $\tree(v)\setminus \{v\}$. \label{enu:cdt_stb}
\end{enumerate}
Condition~\ref{enu:cdt_stb} directly relates to stability of $\tree(v)\setminus \{v\}$, whereas
conditions~\ref{enu:cdt_v}--\ref{enu:cdt_vub} are auxiliary, providing the necessary information
that we need in order to determine the stability of node $v$, and fill out the dynamic programming
table for the parent of $v$.

Consider the table $\tau_r$ at the root node $r$.
The game admits an equilibrium if and only if there exists
$(C, \nbr, \kbr, \Ulb, \Uub)$ such that
$n_B=|T_B|$, $n_R=|T_R|$,
$\tau_r(C, \nbr, \kbr, \Ulb, \Uub) = \true$
for the root node $r$ of $G$, and, moreover,
\begin{itemize}
\item
if $C = \blue$, then
$$\frac{k_B}{k_B + k_R} \geq \uub_{\top};$$
\item
if $C = \red$, then
$$\frac{k_R}{k_B + k_R} \geq \uub_{\top};$$
\item
if $C=\ept$, then for each $X\in\{R, B\}$ with $k_X>0$ it holds that
$$\frac{k_X}{k_B + k_R} \leq \ulb_X, \quad
  \frac{k_X-1}{k_B + k_R-1} \leq \ulb_{X^\dag}.$$
\end{itemize}
The first two conditions ensure that if the root node is not empty,
the agent in that node does not have an incentive to move to another
node of the tree, and the last condition ensures that if the root node
is empty, no agent has an incentive to deviate there (the exact form
of this condition depends on whether the potential deviator is located
in a child of $r$). Together with condition~\ref{enu:cdt_stb},
these conditions ensure that no agent wants to deviate.

The existence of a tuple $(C, \nbr, \kbr, \Ulb, \Uub)$ with these properties
can be decided in polynomial time by going through all entries of
$\tau_r$. It remains to show that $\tau_r$ can be filled in
in polynomial time.

Given $C\in\{\red, \blue, \ept\}$, we write
$\mathbbm{1}_B(C)=1$ if $C=\blue$ and $0$ otherwise;
similarly, $\mathbbm{1}_R(C)=1$ if $C=\red$ and $0$ otherwise, and
$\mathbbm{1}_E(E)=1$ if $C=\ept$ and $0$ otherwise.

We fill the tables in all nodes starting from the leaf nodes of $G$.
For every leaf node $v$, we have

\begin{align} \label{eq:Tv_leaf}
T_v(C, \nbr, \kbr, \Ulb, \Uub) 
=
\begin{cases}
\true, & \text{if } \nbr = \left(\mathbbm{1}_{B}(C),\mathbbm{1}_{R}(C) \right),
         \kbr = (0,0) \text{ and } \Uub = (0,0,0) \\
\false, & \text{otherwise}.
\end{cases}
\end{align}

Suppose now that for a node $w$ we have constructed the table $\tau_v$ for each $v \in \child(w)$.
We will construct $\tau_{w}$ using these tables as follows. Let $\child(w) = \{v_1, \dots, v_L \}$.
We create an intermediate table $\theta_w^{\ell}$ for each
$\ell \in \{0,1, \dots,L\}$. This table has an entry
$\theta_w^{\ell}(C, \nbr, \kbr, \Ulb, \Uub)$ for every tuple $(C, \nbr, \kbr, \Ulb, \Uub)$.
The entry $\theta_{w}^{\ell}(C, \nbr, \kbr, \Ulb, \Uub)$ is set to $\true$ if and only if
conditions~\ref{enu:cdt_v}--\ref{enu:cdt_stb}
hold for the subtree $\tree_\ell(w)$ obtained from $\tree(w)$ by deleting the subtrees rooted at
$v_{\ell+1}, \dots, v_L$.
Note that, by construction, we have
$\tau_w(C, \nbr, \kbr, \Ulb, \Uub) = \theta_w^{L}(C, \nbr, \kbr, \Ulb, \Uub)$.

We construct $\theta_w^{\ell}$ sequentially for $\ell=0, \dots, L$.
We can fill out $\theta_{w}^0$ using Equation~\eqref{eq:Tv_leaf}.
Next, suppose that we have filled out the first $\ell$ tables, i.e.,
$\theta_{w}^0, \dots, \theta_{w}^{\ell-1}$.
We combine $\theta_{w}^{\ell-1}$ and $\tau_{v_\ell}$ in order to build $\theta_{w}^\ell$
as follows: $\theta_{w}^\ell (C, \nbr, \kbr, \Ulb, \Uub) = \true$ if and only if
there exist a pair of tuples
$(C', \nbr', \kbr', \Ulb', \Uub')$ and $(C'', \nbr'', \kbr'', \Ulb'', \Uub'')$ such that
$\theta_{w}^{\ell-1}(C', \nbr', \kbr', \Ulb', \Uub') =
\tau_{v_\ell}(C'', \nbr'', \kbr'', \Ulb'', \Uub'') = \true$
and the following conditions hold:
\begin{enumerate}
\item $C' = C$.

\item $\nbr'' + \nbr' = \nbr$.

\item $\mathbbm{1}_B(C'') + k'_B = k_B$ and $\mathbbm{1}_R(C'') + k'_R = k_R$.

\item For each $X \in \{B,R\}$,
$$\ulb'_{X^\dag} \geq \ulb_{X^\dag}$$
so that the agents occupying nodes $v_1, \dots ,v_{\ell-1}$ have utility at least $\ulb_{X^\dag}$.
Additionally, if $C'' = \blue$, then
$$\frac{k''_B + \mathbbm{1}_B(C')}{k''_B + k''_R + (1-\mathbbm{1}_E(C'))} \geq \ulb_{B^\dag}$$
and, if $C'' = \red$, then
$$\frac{k''_R + \mathbbm{1}_R(C')}{k''_B + k''_R + (1-\mathbbm{1}_E(C'))} \geq \ulb_{R^\dag}$$
so that the agent occupying node $v_\ell$ has utility at least
$\ulb_{B^\dag}$ if she is blue or at least $\ulb_{R^\dag}$ if she is red.
Therefore, if these conditions hold, all agents occupying the first $\ell$ children of $w$
have utility at least $\ulb_{B^\dag}$ or $\ulb_{R^\dag}$, according to their type.

\item For each $X \in \{B,R\}$,
$$
\ulb'_X, \ulb''_X, \ulb''_{X^\dag} \geq \ulb_X,
$$
so that all agents of type $X$ occupying the nodes of $\tree_\ell(w)\setminus (\child(w) \cup \{w\})$
have utility at least $\ulb_X$.

\item For each $X\in\{B,R\}$,
$$\uub'_X \leq \uub_X, \uub''_X \le \uub_X$$
and, if $C'' = \ept$, then
$$
\frac{k''_X + \mathbbm{1}_X(C')}{k''_B + k''_R + (1-\mathbbm{1}_E(C'))} \leq \uub_X,
$$
so that the agents that do not occupy nodes of $\tree_\ell(w)$
have no incentive to deviate to any node in the first $\ell-1$ branches,
any node other than $v_\ell$ in the $\ell$-th branch, or node $v_\ell$.

\item If $C' = \blue$, then
$$\uub_{\top}' \leq \uub_{\top},\quad \uub''_B \leq \uub_{\top},$$
and, if $C'' = \ept$, then
$$\frac{k''_B}{k''_B + k''_R} \leq \uub_{\top},$$
so that the blue agent $i^*$ occupying node $w$ has utility at most $\uub_{\top}$ if she deviates
to a node in the first $\ell-1$ branches, a node in the $\ell$-th branch
(excluding node $v_\ell$), or node $v_\ell$. Similarly, if $C' = \red$, then
$$\uub_{\top}' \leq \uub_{\top}, \quad \uub''_R \leq \uub_{\top}$$
and, if $C'' = \ept$,
$$\frac{k''_R}{k''_B + k''_R} \leq \uub_{\top}.$$

\item
$\ulb''_B, \ulb''_{B^\dag} \geq \uub'_B$
so that blue agents occupying nodes in the $\ell$-th branch (excluding $v_\ell$)
have no incentive to deviate to any node in the first $\ell-1$ branches.
Also, if $C''=\ept$, then
$$\ulb''_B \geq \frac{k''_B + \mathbbm{1}_B(C')} {k''_B + k''_R + \mathbbm{1}_B(C')},$$
and if $k''_B>0$ then
$$\ulb''_{B^\dag} \geq \frac{k''_B + \mathbbm{1}_B(C')-1} {k''_B + k''_R + \mathbbm{1}_B(C')-1},$$
so that blue agents occupying nodes other than $v_\ell$ in the $\ell$-th branch
have no incentive to deviate to $v_\ell$.
Since $\tau_{v_\ell}(C'', \nbr'', \kbr'', \Ulb'', \Uub'')=\true$ means
that these agents already have no incentive to deviate to other empty nodes
in the $\ell$-th branch, now these agents have no incentive to deviate
to any empty node in $\tree_\ell(w)\setminus \{w\}$.
Further, if $C''=\blue$, then
$$\frac{k''_B + \mathbbm{1}_B(C')} {k''_B + k''_R + \mathbbm{1}_B(C')} \geq \uub'_B, \uub''_{\top},$$
so that if there is a blue agent at node $v_\ell$, she
has no incentive to deviate as well. Similar constraints must hold for red agents.

\item $\ulb'_B, \ulb'_{B^\dag}  \geq \uub''_B$ so that blue agents occupying nodes
in the first $\ell-1$ branches have no incentive to deviate to nodes in the $\ell$-th branch
(excluding node $v_\ell$). Additionally, if $C''=\ept$, then
$$\ulb'_B, \ulb'_{B^\dag} \geq \frac{k''_B + \mathbbm{1}_{B}(C')}{k''_B + k''_R},$$
so that blue agents in the first $\ell-1$ branches have no incentive to deviate to $v_\ell$
if it is empty. Similar constraints must hold for red agents as well.
\end{enumerate}
These constraints can be verified in polynomial time by checking each pair of entries of the tables
$\theta_{w}^{\ell-1}$ and $\tau_{v_\ell}$. This completes the proof for instances with
two agent types and no stubborn agents.

To extend the algorithm to instances with stubborn agents, we can set the entry
values of the table $\tau_v$ to $\false$ if $v$ is occupied by a stubborn agent
of a type other than $C$, and only consider possible deviations by strategic agents.  
The algorithm can trivially be extended to
instances with constant number of different agent types; the size of the tables
would scale exponentially with the number of types.
\hfill $\qed$


\end{document}